\documentclass[a4paper, 12pt]{article}

\usepackage[utf8]{inputenc}
\usepackage[T1]{fontenc}
\usepackage{fullpage}

\sloppy


\usepackage{amsmath, amsthm, amssymb}
\usepackage{graphicx}
\usepackage{enumerate}
\usepackage{enumitem}
\usepackage{authblk}

\usepackage[noend]{algpseudocode}
\usepackage{algorithm}
\usepackage{thmtools}
\usepackage{thm-restate}
\usepackage[colorlinks=true, citecolor=red]{hyperref}

\usepackage{tikz}
\renewenvironment{abstract}
{\small\vspace{-1em}
\begin{center}
\bfseries\abstractname\vspace{-.5em}\vspace{0pt}
\end{center}
\list{}{
\setlength{\leftmargin}{0.6in}%
\setlength{\rightmargin}{\leftmargin}}%
\item\relax}
{\endlist}
\usepackage{comment}
\declaretheorem[name=Theorem, numberwithin=section]{theorem}
\declaretheorem[name=Lemma, sibling=theorem]{lemma}

\declaretheorem[name=Definition, sibling=theorem]{definition}

\declaretheorem[name=Claim, sibling=theorem]{claim}

\declaretheorem[name=Remark, style=remark, sibling=theorem]{remark}

\declaretheorem[name=Question, sibling=theorem]{question}
\def\cqedsymbol{\ifmmode$\lrcorner$\else{\unskip\nobreak\hfil
\penalty50\hskip1em\null\nobreak\hfil$\lrcorner$
\parfillskip=0pt\finalhyphendemerits=0\endgraf}\fi}


\interfootnotelinepenalty=10000

\newcommand{\cG}{\mathcal{G}}

 %
 %
 %
 %

\DeclareMathOperator{\diam}{diam}

 %

\let\le\leqslant
\let\ge\geqslant
\let\leq\leqslant
\let\geq\geqslant

\title{Short and local transformations between ($\Delta+1$)-colorings\thanks{This work was supported by ANR project GrR (ANR-18-CE40-0032)}}
\author[1]{Nicolas Bousquet}
\author[1]{Laurent Feuilloley}
\author[2]{Marc Heinrich}
\author[3]{Mikaël Rabie}

\affil[1]{Univ. Lyon, CNRS, INSA Lyon, UCBL, LIRIS, UMR5205, F-69621, Lyon, France. nicolas.bousquet@univ-lyon1.fr and laurent.feuilloley@univ-lyon1.fr}
\affil[2]{School of Computing, University of Leeds, UK. m.heinrich@leeds.ac.uk}
\affil[3]{Université Paris Cité, CNRS, IRIF, F-75006, Paris, France. mikael.rabie@irif.fr}

\date{}

\begin{document}

\maketitle

\begin{abstract}
    Recoloring a graph is about finding a sequence of proper colorings of this graph from an initial coloring $\sigma$ to a target coloring $\eta$. Adding the constraint that each pair of consecutive colorings must differ on exactly one vertex, one asks: Is there a sequence of colorings from $\sigma$ to $\eta$? If yes, how short can it~be?
    
    In this paper, we focus on $(\Delta+1)$-colorings of graphs of maximum degree $\Delta$. Feghali, Johnson and Paulusma proved that, if both colorings are non-frozen (i.e. we can change the color of a least one vertex), then  a quadratic recoloring sequence always exists. We improve their result by proving that there actually exists a linear transformation (assuming that $\Delta$ is a constant). 
    
    In addition, we prove that the core of our algorithm can be performed locally.
    Informally, this means that after some preprocessing, the color changes that a given node has to perform only depend on the colors of the vertices in a constant size neighborhood.
    We make this precise by designing of an efficient recoloring algorithm in the LOCAL model of distributed computing. 
    
    \medskip
    
    Keywords: Graph reconfiguration, recoloring, linear transformation, non-frozen colorings,  distributed algorithms.
\end{abstract}

\section{Introduction}

\subsection{Graph Recoloring and configuration graph}
\label{subsec:intro-recoloring}

A (proper) coloring of a graph is an assignment of colors to the vertices such that no two neighbors have the same color. 
Given two colorings of a given graph (referred to as the \emph{source} and \emph{target colorings}), the process of recoloring consists in starting from the source coloring and changing the color of one vertex at a time, in order to reach the target coloring, with the guarantee that the coloring is proper at all steps.


For a given graph $G$, and an integer~$k$, the two classic recoloring questions are:

\begin{question}
\label{question:possible}
Is it possible to find a recoloring between any pair of $k$-colorings of $G$? 
\end{question} 

\begin{question}
\label{question:nb-steps}
When it is possible, how many steps are needed?
\end{question}

We can restate these questions from an alternative point of view using the \emph{configuration graph}. 
For a given graph~$G$ and an integer $k$, the \emph{$k$-configuration graph} has a vertex for each proper coloring of~$G$, and an edge between every pair of colorings that differ on exactly one vertex of~$G$. 
Finding a recoloring sequence between two colorings is then the same as finding a path in the configuration graph.
From this point of view, Question~\ref{question:possible} becomes: Is the configuration graph connected? and Question~\ref{question:nb-steps} becomes: What is the diameter of the configuration graph?


\paragraph{Typical behavior when the number of colors varies.}
For any given graph, the answer to the two questions above varies with the total  number $k$ of colors (the \emph{palette size}). The typical behavior one can expect can be described by a series of regimes:

\begin{enumerate}
    \setlength{\itemsep}{0.8pt}
    \item With too few colors, proper colorings simply do not exist, hence we cannot discuss recoloring.
    \item With few colors, proper colorings exist, but for most pairs of colorings, recoloring is impossible. (That is, the configuration graph has many small connected components.) 
    \item\label{item:regime-a} With a larger palette, we reach a regime where recoloring is feasible in general, but it can take many steps. (That is, the configuration graph is connected, or not far from being connected, but it has large diameter.)
    \item By increasing the number of colors, we make the recoloring sequences shorter and shorter.
    \item\label{item:regime-b} On the extreme, with a large enough number of colors, a recoloring process exists where every vertex has its color changed only a constant number of times.
\end{enumerate}

\subsection{Palette size as a function of the maximum degree}

Many works are devoted to understand precisely when the changes of regimes occur for specific classes of graphs, or for different values of some graph parameters. 
The best-studied parameters are the degeneracy of the graph and the maximum degree.
In this paper, we focus on graphs of maximum degree $\Delta$, where the palette size is a function of~$\Delta$.
This setting has attracted a lot of interest, not only in the graph theory community, but also in the random sampling and statistical physics communities. (We will mention some results from this perspective below, but more can be found in the Related Work section.) 

Previous work have established that the most important change of regime happens between the palette sizes $\Delta+2$ and $\Delta+1$ . 
For every $k \ge \Delta+2$, the $k$-configuration graph is connected, that is, any coloring can be reached from any other by vertex recolorings. Moreover, the diameter of the configuration graph is at most $2n$~\cite{CambieCvBC22+}. Note that the diameter of the configuration graph is always at least linear in $n$, since if we consider two colorings where colors are permuted, all the vertices have to be recolored at least once.
An important conjecture in the random sampling community is that the mixing time of the Markov chain of the $(\Delta+2)$-colorings of any graph is $O(n \log n)$. In other words, given any $(\Delta+2)$-coloring of a graph, if we perform a (lazy) random walk on the set of proper $(\Delta+2)$-colorings, we should sample (almost) at random a coloring after $O(n \log n)$ steps.\footnote{This question is still widely open, and the best known upper bound on the number of colors to obtain a polynomial mixing time is $(\frac{11}{6}-\epsilon) \Delta$~\cite{ChenDMPP19}, slightly improving a classical result of Vigoda~\cite{Vigoda99}.}

In contrast, for $(\Delta+1)$ colors, the configuration graph is disconnected, in general. 
At first sight, this is a huge step for just one color less: we go from a case where we could navigate between colorings in the fastest way, even at random somehow, to a case where we cannot even reach some colorings from some others. The landscape is actually more subtle.
Indeed, Feghali, Johnson, and Paulusma~\cite{FeghaliJP16} proved that the configuration graph of the $(\Delta+1)$-colorings of a graph consists of a set of isolated vertices plus a unique component containing all the other colorings. 
Note that isolated vertices in the configuration graph correspond to colorings that are \emph{frozen}, in the sense that no change of colors can be performed. 
Bonamy, Bousquet, and Perarnau~\cite{bonamy2021frozen} proved that, if $G$ is connected, then the proportion of frozen $(\Delta+1)$-colorings of $G$ is exponentially smaller than the total number of colorings. Therefore, the configuration graph consists of a giant component of small diameter plus isolated vertices, and a long enough random walk can visit almost all $(\Delta+1)$-colorings. 


Now, the key question is: What is the diameter of the giant component? If it is small as a function of $n$, then we are in a situation close to the one of $\Delta+2$ colors, with just some special cases which are frozen colorings. If it is large, then there is a real change of regime: not only are there some isolated vertices, but it is hard to navigate between the colorings of the giant component.

In their influential paper, Feghali, Johnson, and Paulusma~\cite{FeghaliJP16} proved an upper bound of~$O(n^2)$ on the diameter of the unique non-trivial component. That is, the configuration graph does not have a large diameter, but we do not know whether it is as small as in the case of $\Delta+2$ colors.
Actually, we know that $\Theta(n^2)$ is the best possible for paths, thanks to a specific lower bound construction for $3$-colorings of paths~\cite{BonamyJ12}.
But what about graphs that are not paths? 
For general graphs, the only known lower bound is the trivial $\Omega(n)$, leaving open whether there is something special about $\Delta=2$ or whether the lower bound could be generalized. Our first theorem establishes that we are in the first situation: as soon as $\Delta$ is at least~3, the diameter of the non-trivial component drops to $\Theta(n)$ (for constant~$\Delta$).

\begin{restatable}{theorem}{ThmGlobal}
\label{thm:main-global}
Let $G$ be a connected graph with $\Delta \ge 3$ and $\sigma,\eta$ be two non-frozen $k$-colorings of $G$ with $k \ge \Delta+1$. Then we can transform $\sigma$ into $\eta$ with a sequence of at most $O(\Delta^{c \Delta}n)$ single vertex recolorings, where $c$ is a constant.
\end{restatable}

In other words, we lower the upper bound on the diameter of the non-trivial component from $O(n^2)$ to $f(\Delta)\cdot n$ and give an alternative and completely different proof of the result of~\cite{FeghaliJP16}. 
This brings $(\Delta+1)$-colorings in the category of colorings for which there exists linear transformations, a topic that has received considerable attention in recent years (see Section~\ref{sec:diam_lin}). 

An interesting direction for future work is to determine whether we can reduce the dependency in terms of $\Delta$. We actually have no lower bound that ensures that a dependency in $\Delta$ is necessary. In other words:

\begin{question}
Given $\alpha,\beta$ two non-frozen $(\Delta+1)$-colorings, is it possible to transform $\alpha$ into $\beta$ in $O(n)$ steps independent of $\Delta$?
\end{question}


Finally, note that at some steps of the proof, we can reduce the exponential dependency on $\Delta$ into a polynomial one by adapting a result of Bousquet and Heinrich~\cite{BousquetH19}, but we did not succeed to do it at every step. We thus decided to keep the proof as simple as possible.


\subsection{Configuration graphs of linear diameter}\label{sec:diam_lin}

An active line of research consists in determining which number of colors ensures that the diameter of the configuration graph is linear, in various settings beyond the bounded degree case. 
In addition to the optimality, the focus on this regime is motivated by the fact that having a linear diameter is a necessary condition to get an almost linear mixing time for the underlying Markov chain. 

Bousquet and Perarnau~\cite{BousquetP16} proved that the diameter of the $k$-configuration graph is $O(dn)$ as long as $k \ge 2d+2$ where $d$ denotes the degeneracy of the graph. Even if this proof holds by induction, it is usually hard to get linear diameters via induction proofs or vertex identification techniques, the most classic techniques in reconfiguration proofs. Several other techniques have then been used in other papers such as: discharging proofs~\cite{BartierBFHMP23,BousquetH19}, Thomassen-like approaches~\cite{DvorakF21}, or buffer sliding~\cite{BousquetB19}. 

Theorem~\ref{thm:main-global} is a contribution in this line of research since we prove that the configuration graph consists of isolated vertices plus, eventually, a connected component of linear diameter. This is, as far as we know, the first result which provides a linear diameter on the connected components of the configuration graph while the configuration graph itself is not necessarily connected. 

Our proof of Theorem~\ref{thm:main-global} introduces a new proof technique to ensure that the reconfiguration graph admits a linear diameter. We think that this proof technique is of interest and can be generalized further to other problems. The technique is related to the notions of parallelization and locality for reconfiguration, that we introduce in the next subsections. 
These have been studied recently by the distributed computing community, but as far as we know, had not been used in the more classic (sequential) reconfiguration world.

\subsection{Parallel recoloring}

\paragraph{Recoloring in parallel and dependencies between recoloring steps.}
Consider a recoloring instance where the source and the target colorings differ only on an independent set.
In this instance, any sequence created by iteratively assigning its target color to a vertex that does not have it already, is valid. 
The order can be chosen arbitrarily, because there is no dependency between the color changes: a vertex does not need one of its neighbors to first change its color in order to be able to change its own.

A way to capture this absence of dependency is to note that we can parallelize the recoloring: we can just take all the vertices that do not have their target colors and recolor them in parallel. Note that we need to be careful with the notion of parallel recoloring: after all, for any recoloring task, we could just say ``recolor all vertices in parallel'', but this would not make sense, since we simultaneously recolor adjacent vertices. 
To make it meaningful, the standard definition consists in allowing parallel recoloring only for vertices that are not adjacent.\footnote{We will review the literature on distributed recoloring later in the paper.} In other words, at any given step  of a parallel recoloring schedule, the vertices that change color form an independent set. From such a parallel schedule, it is easy to derive a sequential recoloring schedule: decompose any parallel step by performing all the individual vertex recolorings one after the other. 

Now, one might wonder if, in general, allowing parallel recoloring dramatically reduces the number of steps or not.
Let us consider two examples with very different behaviors.

\paragraph{Paths with $\Delta+1$ colors.}
Consider the case of paths with $3$ colors (note that $\Delta=2$ for paths). In particular, consider a source coloring  of the form 1,2,3,1,2,3... and a target coloring of the form 2,3,1,2,3,1... 
In this case, even if we allow parallelization, at step~$i$, only the vertices at distance at most~$i$ from an extremity can change color. 
Therefore, the recoloring must be very sequential, and we will use at least $\Omega(n)$ parallel steps. In other words, there are strong dependencies between color changes\footnote{Actually in that case, one can prove that a recoloring sequence needs $\Theta(n^2)$ single vertex recolorings and that we can recolor it with  $\Theta(n)$ parallel steps.}.

\paragraph{General graphs with $2\Delta+2$ colors.}
Now for $2\Delta+2$ colors, the behavior is completely different. We will illustrate this by designing an algorithm producing a very short parallel recoloring schedule.
The algorithm is based on the idea of partitioning the palette of colors into two smaller palettes of size $\Delta+1$ each, \emph{palette A} ($a_1,...,a_{\Delta+1}$) and \emph{B} ($b_1,...,b_{\Delta+1}$). The key observation is: in any proper coloring, for any vertex with a color from palette $A$ (resp. $B$), we can find a new non-conflicting color in palette $B$ (resp. $A$), because the smaller palettes are large enough. See Algorithm~\ref{algo:parallel}.
 
\begin{algorithm}
    \begin{algorithmic}
\For{$i$ in [$b_1,...,b_{\Delta+1}$]} 
    \State At Step $i$: every vertex with source color $i$ takes a new non-conflicting color in palette $A$.
\EndFor
\For{$i$ in [$b_1,...,b_{\Delta+1}$]} 
    \State At Step $i+\Delta+1$: every vertex with target color $i$ takes its target color.
\EndFor
\For{$i$ in [$a_1,...,a_{\Delta+1}$]} 
    \State At Step $i+2\Delta+2$: every vertex with target color $i$ takes a new non-conflicting color in palette $B$.
\EndFor
\For{$i$ in [$a_1,...,a_{\Delta+1}$]} 
    \State At Step $i+3\Delta+3$: every vertex with target color $i$ takes its target color.
\EndFor
\end{algorithmic}
\caption{Generating a parallel schedule for $2\Delta+2$ colors.}\label{algo:parallel}
\end{algorithm}


Note that the color changes happening at the same step are performed by sets of vertices that are independent, since they are color classes either in the source or the target coloring. 

This algorithm produces a schedule that uses $4(\Delta+1)$ recoloring steps, which is very small when compared with the (at least) $n$ steps that are necessary for sequential recoloring in general, and for the parallel recoloring of paths in the previous paragraph. 

This naturally leads to the following third question.

\begin{question}
\label{question:nb-paraellel-steps}
When recoloring is possible, how many \emph{parallel steps} are needed?
\end{question}

\paragraph{Parallel recoloring with $(\Delta+1)$ colors.}
The proof of Feghali, Johnson, and Paulusma~\cite{FeghaliJP16} is based on an identification technique and needs $O(n)$ parallel steps. And one can naturally see that this result cannot be improved since, if we take the power of a path (which can be chosen of arbitrarily large degree), we can have an almost frozen coloring except on the boundaries (as in the case of paths with $3$ colors). So, in order to recolor a vertex in the middle, we first have to recolor a long chain of vertices starting in the border of the graph. However, what we succeeded to prove is that this large number of parallel steps are actually needed in order to unfreeze vertices almost everywhere. And we prove that we can do it efficiently. In other words, the following holds:

\begin{theorem}[Updated version of Theorem~\ref{thm:main-global}]
For any connected graph of maximum degree $\Delta \geq 3$ and $k \ge \Delta+1$, we can transform any non-frozen $k$-coloring into any other with a sequence of:
\begin{itemize}
    \item at most $O(n)$ single vertex recolorings followed by,
    \item a parallel schedule of length at most $O(f(\Delta))$.
\end{itemize}
\end{theorem}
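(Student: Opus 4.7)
The plan is to refine the algorithm behind Theorem~\ref{thm:main-global} so that its $O(n)$-long sequence factors cleanly as a linear-size sequential prefix followed by a suffix whose parallel depth depends only on $\Delta$. The prefix does the ``hard'' work, paying a linear cost to install enough local slack, while the suffix merely exploits that slack via a bounded number of independent-set rounds, in the spirit of Algorithm~\ref{algo:parallel}.

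For the prefix, I would use Theorem~\ref{thm:main-global} to travel from $\sigma$ to an intermediate coloring $\mu$ with the following local property: every vertex $v$ either already satisfies $\mu(v)=\eta(v)$, or admits a ``parking color'' $p(v)\in\{1,\dots,\Delta+1\}\setminus(\mu(N[v])\cup\eta(N(v)))$. Such a parking color is exactly what is needed to imitate the two-palette trick of Algorithm~\ref{algo:parallel} locally: as the parallel phase proceeds, each neighbor of $v$ either still carries its $\mu$-color or has already switched to its $\eta$-color, so $p(v)$ remains properly available at the moment $v$ parks. Since the construction behind Theorem~\ref{thm:main-global} proceeds locally along a spanning structure and already has length $O(f(\Delta))\cdot n$, I expect to enforce the extra parking-slot requirement by iterating its core subroutine a bounded number of times per vertex, keeping the total at $O(n)$ single-vertex recolorings.

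For the suffix, given $\mu$ and the parking map $p$, I would process the $\Delta+1$ target colors one by one. For each color $c$, in one parallel step every still-misplaced vertex $v$ with $\eta(v)=c$ jumps to $p(v)$, and in the next parallel step every vertex with $\eta(v)=c$ adopts $c$. The vertices moving at a given round form an independent set (a subset of an $\eta$-color class), and the defining property of $p(v)$ guarantees properness throughout. This yields a parallel suffix of depth at most $2(\Delta+1)=O(\Delta)$.

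The main obstacle I anticipate is the actual existence of the parking color: with only $\Delta+1$ colors available and $\mu(N[v])\cup\eta(N(v))$ possibly using them all up, no valid $p(v)$ may exist at some vertices. I would address this by partitioning the misplaced vertices into a constant number $g(\Delta)$ of classes, for instance via a distance-$2$ coloring of $G$ which uses $O(\Delta^2)$ classes, and running the parallel suffix once per class. Within a fixed class, the vertices moving simultaneously are at distance at least $3$ from each other, so the local conflict set seen by each $v$ drops significantly and a valid parking color can be shown to exist. This multiplies the parallel depth by $g(\Delta)$ but keeps it independent of $n$, matching the bound $O(f(\Delta))$. A secondary subtlety, already handled by Theorem~\ref{thm:main-global}, is that the sequential prefix must first propagate unfrozenness throughout the giant component before the parallel phase can be triggered on the whole graph at once.
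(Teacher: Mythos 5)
Your parallel suffix breaks at the adoption step, not at the parking step. When a misplaced $v$ with $\eta(v)=c$ jumps from $p(v)$ to $c$, each neighbor $w$ is currently at $\eta(w)$ (if its color class already ran) or still at $\mu(w)$ (otherwise). Properness of $\eta$ rules out $\eta(w)=c$, but nothing prevents $\mu(w)=c$ for a neighbor whose target color is scheduled later: at that moment $v$ collides with $w$. The property $p(v)\notin\mu(N[v])\cup\eta(N(v))$ protects only the jump \emph{to} $p(v)$, not the jump \emph{away} from it; and parking $w$ as well merely postpones the collision, since with $\Delta+1$ colors there is in general no ordering of the color classes that makes the adoptions safe --- this is exactly the obstruction the $2\Delta+2$ two-palette trick is designed to avoid and it does not vanish at $k=\Delta+1$. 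Separately, $p(v)$ need not exist at all: a vertex frozen in $\mu$ has $|\mu(N[v])|=\Delta+1$, and your distance-$2$ partitioning does not help, because whether a free color exists at $v$ is determined by $\mu$ and $\eta$ restricted to $N[v]$, not by which far-away vertices move simultaneously. Finally, the intermediate $\mu$ you want --- unfrozen at every misplaced vertex, with slack against both $\mu(N[v])$ and $\eta(N(v))$ --- is a much stronger object than what Theorem~\ref{thm:main-global} produces, and ``iterating its core subroutine'' is not an argument that such a $\mu$ is reachable in $O(n)$ steps.

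The paper avoids both problems by not asking for per-vertex slack at all. It fixes a sparse maximal independent set $I$, spends the linear prefix unfreezing only the vertices of $I$ (Proposition~\ref{prop:centralized-warming}), and then observes that $G\setminus B_I$ is $(\Delta-1)$-degenerate with a degeneracy ordering made of $O(\Delta)$ consecutive independent sets (layers by distance to $B_I$, refined by a $k$-coloring). A list-coloring recoloring (Lemma~\ref{lem:ISrecoloring_List}) then matches $G\setminus B_I$ to a fixed $\Delta$-coloring $\gamma$ in $\Delta^{O(\Delta)}$ parallel steps, with conflicts near the balls absorbed by the unfrozen centers via the $r$-safeness of Theorem~\ref{thm:local-duplication}, and the balls themselves recolored last (Proposition~\ref{prop:main}). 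The slack comes from the degeneracy ordering and the unfrozen centers of $I$, not from a global parking map, so no ordering of colors needs to exist and no free color at every misplaced vertex is required.
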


Actually our result is even better since we can ensure that, if, in the initial and target colorings, for each vertex there is a non-frozen vertex close enough, we can simply remove the first part and only recolor with a parallel schedule of length at most $O(f(\Delta))$.


\subsection{Distributed algorithms and locality}

\paragraph{From parallel recoloring to locality}

When studying Algorithm~\ref{algo:parallel}, it appears that not only it produces a short parallel schedule, but it also has a strong \emph{locality} property.
More precisely, let the recoloring schedule of a given vertex be the series of color changes it has to take, along with the appropriate time step.
We claim that the recoloring schedule of any fixed vertex is independent of the vertices that are outside a ball of size $4(\Delta+1)$ around it.
That is, changing the source/target colors of far away nodes, or even the topology of the graph far from the vertex would not change anything from the point of view of the vertex.

To see this, consider a node $v$ with color $b_2$. At stage 1 of the algorithm, it has to choose a non-conflicting color in palette $A$. As a consequence, its color depends on the colors of its neighbors at that moment. These can be of two types: either they are the source colors of the vertices, or they are colors of vertices with source color $b_1$, that have changed at the previous step, based on their neighbors (which all had their source colors). Therefore, the new color only depends on the initial colors of the vertices at distance at most two. 
Our claim follows from the iteration of this observation.

\paragraph{Locality and LOCAL model}

Notions of locality with the flavor described above have been studied for a long time in the theory of distributed computing, in what is called the \emph{LOCAL model}. 
There, a typical question is: suppose we let every vertex know its neighborhood at distance $\ell$, can it choose an output such that the collection of individual outputs makes sense globally? In our setting, this translates to: if every vertex knows its neighborhood at distance $\ell$, can it produce a schedule such that globally the schedule is a proper parallel recoloring? This distance $\ell$ is called the \emph{number of rounds} in the LOCAL model, and can be called the locality of the task. Hence, the following question:

\begin{question}
\label{question:locality}
What is the locality of recoloring?
\end{question}

Our example of $2\Delta+2$ colors was useful to introduce the notion of parallel schedule length and of locality, but it can be misleading because the two are basically equal in this case. 
This is because, at each step a node can check the colors of its neighbors and update its own, performing both the computation of the schedule and the application of it. 
In general, there is no such equality. It can be that the locality is larger, because the vertices need to look far to be able to produce a proper schedule (in particular, with no two adjacent nodes changing color at the same time). It can also be that the schedule is larger, for example it can be larger than $n$ (when the recoloring sequence is super polynomial for instance) whereas the locality can never be larger than $n$, since within this distance the vertex can see the whole graph.
(There are some subtleties here related to symmetries, that will be clarified when defining properly the LOCAL model.)

\paragraph{Back to $\Delta+1$ colors}

In general, when we consider a non-frozen $(\Delta+1)$-coloring of a graph $G$, there cannot exist a short parallel schedule, and the recoloring cannot be very local. This is because there might exist a unique non-frozen vertex, and the ``non-frozen-ness'' might have to be propagated edge by edge to the rest of the graph, in a similar way as for 3-colored paths.
What we prove is that it is the only case where the transformations between $(\Delta+1)$-colorings have to be global.
That is, if non-frozen vertices are well-spread, then we can actually compute a recoloring sequence locally.
Our second main theorem is the following:\footnote{For a formal definition of LOCAL model and of r-locally non-frozen colorings, the reader is referred to Section~\ref{sec:prelim}.}

\begin{restatable}{theorem}{ThmLocal}
\label{thm:main-local}
Let $G$ be a graph with $\Delta \ge 3$, $k \in \mathbb{N}$ with $k \ge \Delta+1$. Let $\sigma,\eta$ be two $k$-colorings of $G$ which are $r$-locally non-frozen. There exists three constants $c,c',c''$ such that we can transform $\sigma$ into $\eta$ with a parallel schedule of length at most $O(k^{c\Delta}+\Delta^{c'}r)$. Moreover, this schedule can be found in
\begin{itemize}
    \item $O(\Delta^{c''}+\log^* n+k)$ rounds if $k \ge \Delta+2$.
    \item $O(\Delta^{c''}+\log^2n\cdot \log^2\Delta+k)$ rounds otherwise. 
\end{itemize}
\end{restatable}

Informally, the number of rounds we need in the LOCAL model to provide a distributed recoloring sequence can be seen as how much we need to understand the graph globally to provide a recoloring sequence. When we look into our proof, the $\log^* n$ (or $\log^2n$) term in the number of rounds is there to compute a maximal independent set at distance $\Omega(1)$ (or a $\Delta+1$ coloring). If we are given such colorings and independent sets, the number of rounds is independent of $n$.

\paragraph{Impact and open questions for distributed recoloring.}
Our results are also interesting from the distributed computing point of view, since they improve on the state of the art of distributed recoloring in several ways. 
Theorem~\ref{thm:main-local} directly improves some results of~\cite{BonamyORSU18} on distributed recoloring. 
One problem studied in~\cite{BonamyORSU18} consists in recoloring $3$-colored graphs of maximum degree $3$ with the help of an extra color. They provide an algorithm that finds a parallel schedule of length $O(\log n)$ in a polylogarithmic number of rounds in the LOCAL model. Theorem~\ref{thm:main-local} implies that a constant length schedule can be found in $O(\log^*n)$ rounds (and it holds even if we start from an arbitrary locally non-frozen $4$-colorings instead of $3$-colorings plus an additional color). Theorem~\ref{thm:main-local} also directly  solves two open questions from~\cite{BonamyORSU18}: 
\begin{itemize}
    \item The first question is about the complexity of finding a schedule to recolor a $\Delta$-coloring with an extra color. Theorem~\ref{thm:main-local} gives an algorithm that finds a parallel schedule of length $f(\Delta)$ in $O(F(\Delta)\log^*n)$ communication rounds (since these colorings can be seen as non-frozen $(\Delta+1)$-colorings).
    \item The second question deals with the case of $4$-colored toroidal grids with an extra color. We provide an algorithm with a constant length schedule after $O(\log^*n)$ rounds. 
\end{itemize} 

We leave as an open problem whether a schedule can be found even faster. In particular, we conjecture that, in the case of toroidal grids, such a schedule could be found in $O(1)$ communication rounds, by using the input and target colorings as symmetry-breaking tools. More generally, we were not able to answer that question:

\begin{question}
Does there exist a non-constant function $f$ such that an algorithm computing a recoloring schedule in the LOCAL model between any pair of $28$-locally non-frozen $(\Delta+1)$-colorings takes $\Omega(f(n))$ communication rounds.
\end{question}

Note that a lower bound result of this flavor can be found in~\cite{Censor-HillelR20} for the problem of maximal independent set reconfiguration but we did not manage to adapt it in our setting.

\subsection{Related work}

In this section, we focus on recoloring literature.
For references about the larger field of reconfiguration, the reader is referred to the two recent surveys on the topic~\cite{Nishimura17,Heuvel13}. 

\paragraph{Markov chain motivation}
A major motivation to study the configuration graph of colorings is the importance of this object for random sampling. The diameter of the configuration graph is a straightforward lower bound on the mixing time of the underlying Markov chain, which corresponds to sampling colorings by local changes. 
Since proper colorings correspond to states of the anti-ferromagnetic Potts model at zero temperature, Markov chains related to graph colorings received considerable attention in statistical physics and many questions related to the ergodicity or the mixing time of these chains remain widely open (see e.g.~\cite{ChenDMPP19,frieze2007survey}).

\paragraph{Recoloring graphs with other bounded parameters}
So far we have considered graphs where the degree is bounded, since it is the setting of our results, but let us quickly mention results in classes where other parameters are bounded.  
Bonsma and Cereceda~\cite{BonsmaC09} proved that there exists a family $\mathcal{G}$ of graphs and an integer $k$ such that, for every graph $G \in \mathcal{G}$, there exist two $k$-colorings whose distance in the $k$-configuration graph is finite and super-polynomial in $n$.
Cereceda conjectured that the situation is different for degenerate graphs.
A graph $G$ is \emph{$d$-degenerate} if any subgraph of $G$ admits a vertex of degree at most~$d$. In other words, there exists an ordering $v_1,\ldots,v_n$ of the vertices such that for every $i \le n$, the vertex $v_i$ has at most $d$ neighbors in $v_{i+1},\ldots,v_n$.
It was shown independently in~\cite{dyer2006randomly} and~\cite{Cereceda09} that for any $d$-degenerate graph $G$ and every $k \geq d+2$, $\cG(G,k)$ is connected. 
However, the (upper) bound on the $k$-recoloring diameter given by these constructive proofs is of order $c^{n}$ (where $c$ is a constant). 
Cereceda~\cite{Cereceda} conjectured that the diameter of $\cG(G,k)$ is of order $\mathcal{O}(n^2)$, as long as $k \ge d+2$. If correct, the quadratic function is sharp, even for paths or chordal graphs as proved in~\cite{BonamyJ12}. The best known bound on this conjecture is due to Bousquet and Heinrich~\cite{BousquetH19}, who proved that the diameter of $\cG(G,k)$ is $n^{d+1}$. The conjecture is known to be true for a few graph classes, such as chordal graphs~\cite{BonamyJ12} and bounded treewidth graphs~\cite{BonamyB18,Feghali19}.

\paragraph{Distributed recoloring}
Distributed recoloring in the LOCAL model was introduced in~\cite{BonamyORSU18}, and implicitly studied before  in~\cite{panconesi1995local}. In~\cite{BonamyORSU18}, the authors focus on recoloring $3$-colored trees, subcubic graphs and toroidal grids, and in~\cite{panconesi1995local}, the focus is on transforming a $(\Delta+1)$-coloring into a $\Delta$-coloring.
More recently, \cite{BousquetFHR21} designed efficient distributed recoloring for chordal and interval graphs. 
A few reconfiguration problems different from coloring have been studied in the distributed setting, including vertex cover~\cite{Censor-HillelMP22}, maximal independent sets~\cite{Censor-HillelR20}, and spanning trees~\cite{GuptaKP22}.

\subsection{Organization of the paper}
This introduction described the motivation and big picture.
In Section~\ref{sec:prelim}, we give the definitions needed in the rest of the paper. In Section~\ref{subsec:proof-outline}, we sketch the proof techniques. Sections~\ref{sec:locallysafe} and \ref{sec:recoloring-non-froz} are devoted to the full proofs of the results.

\section{Preliminaries}\label{sec:prelim}

\paragraph{Classic graph definitions.}
All along the paper $G=(V,E)$ denotes a graph, $n$ is the number of vertices (i.e. $n=|V|$), and $k$ is a positive integer. For standard definitions and notations on graphs, we refer the reader to~\cite{Diestel}. 
Let $G$ be a graph and $v$ be a vertex of $G$. We denote by $N(v)$ the set of \emph{neighbors} of $v$, that is the set of vertices adjacent to $v$. 
The \emph{degree} of a vertex is the number of neighbors it has, and $\Delta(G)$ the maximum degree of~$G$ (often denoted $\Delta$ for short).
The set $N[v]$, called the \emph{closed neighborhood of~$v$}, denotes the set $N(v) \cup \{ v \}$. Given a set $X$, we denote by $N(X)$, the set $(\cup_{v \in X} N(v)) \setminus X$. The \emph{distance} between $u$ and $v$ in $G$ is the length of a shortest path from $u$ to $v$ in $G$ (by convention, it is $+ \infty$ if no such path exists), and it is denoted by $d(u,v)$. Let $r \in \mathbb{N}$. 
We denote by $B(v,r)$ the \emph{ball} of center $v$ and radius $r$, which is the set of vertices at distance at most $r$ from~$v$. 
A vertex $w$ belongs to the \emph{boundary} of $B(v,r)$ if the distance between $v$ and~$w$ is exactly $r$. The \emph{interior} of a ball $B$ is the ball minus its boundary (\emph{i.e.} $B(v,r-1)$ for a ball $B(v,r)$, with $r>0$).



\paragraph{Frozen and safe recoloring.}
Let $c$ be a coloring of $G$. A vertex $v$ is \emph{frozen} in $c$ if all the colors appear in~$N[v]$.
The coloring $c$ is \emph{frozen} if all the vertices are frozen. 
Note that a frozen coloring is an isolated vertex of the configuration graph.

Let $\alpha$ be a coloring of $G$, and $X$ be a subset of vertices. We denote by $G[X]$ the subgraph of $G$ induced by $X$, and by $\alpha_X$ the coloring $\alpha$ restricted to the vertices of $X$.
We say that two colorings $\alpha$ and $\beta$ agree on $X$ if $\alpha_{X} = \beta_{X}$.

A \emph{recoloring step} consists in changing the color of a non-frozen vertex to one that does not appear in its neighborhood. In a \emph{recoloring by independent sets}, instead of changing the color of one vertex at each step, we are allowed to change the color of an independent set of non-frozen vertices (while keeping a proper coloring).

We introduce three new definitions: \emph{r-locally non-frozen colorings}, \emph{$r$-safe graphs} and \emph{ladders}.  

\begin{definition}
A coloring is \emph{$r$-locally non-frozen} if, for every vertex $v$, there exists a non-frozen vertex at distance at most $r$ from $v$.
\end{definition}

\begin{definition}
The $k$-colorings of a graph $G$ are \emph{$r$-safe} if, for every vertex $v$, and every $k$-coloring where $v$ is non-frozen, the following holds. For any vertex $w$ at distance $r$ from $v$, there exists a recoloring sequence, such that: $w$ is recolored, all the other recolored vertices are in the interior of $B(v,r)$, and $v$ is again non-frozen at the end of the sequence.
\end{definition}

Since we only consider $(\Delta+1)$-colorings in the paper, we will say that $G$ is \emph{$r$-safe} if the $(\Delta+1)$-colorings of $G$ are $r$-safe.

The last definition we introduce originates from the following remark. 

\begin{remark}
\label{rk:unfreeze-neighbor}
Consider a non-frozen vertex $u$ in a $\Delta+1$-coloring of the graph. If we change its color, then all its frozen neighbors become unfrozen. 
\end{remark}

Indeed, before the change, for any frozen neighbor $v$ of $u$, all the colors appear exactly once in $N[v]$ (because we consider $\Delta+1$ colors). Thus, after the change, the old color~$c$ of $u$ does not appear anymore in $N[v]$, and $v$ has two possible colors: its current color and~$c$.
Now, let us go one step further. Suppose that $v$ had another neighbor $z$, not adjacent to~$u$, that was also frozen at the beginning. The recoloring of $u$ keeps $z$ frozen, but then the  recoloring of $v$ with color $c$ unfreezes it. 
By iterating this process, we get what we call a \emph{ladder}.

\begin{definition}
Given an induced path $P$ where the first vertex in the path is non-frozen, and all the other vertices are frozen, \emph{a ladder} consists in recoloring all the vertices of $P$ one by one.
\end{definition}

Note that at the end of the sequence, the other endpoint $w$ of $P$ has changed color, and it is non-frozen.
Moreover, for every consecutive pair of vertices $v_iv_{i+1}$ in the path, where $v_i$ appears first between $v$ and $w$, the final color of $v_{i+1}$ is the initial color of $v_i$.

\paragraph{\textsc{LOCAL} model and distributed recoloring}

The LOCAL model is a classic model of distributed computing (see the books and surveys \cite{BarenboimE13, Peleg00, suomela2013survey}). The model consists of a graph $G=(V,E)$ where each vertex $v\in V$ has a unique identifier (\emph{e.g.} an integer in $[1,n^2]$). 
Each edge corresponds to a communication link between two vertices. 
Initially, each vertex only has access to its identifier and its list of neighbors. 
The computation proceeds in rounds, and each vertex communicates synchronously with its neighbors at each round. 
We say that an algorithm \emph{runs in $r$ rounds} if every vertex can run this algorithm for $r$ rounds and then gets an output.
Since we do not impose any limit on the memory, on the information propagated along a link at each round, or the computational power of the vertices, this is equivalent to a model where every vertex knows its full neighborhood at distance at most $r$, and then chooses an output. 

Usually, the validity of the output can be checked locally. For example, to check that a coloring is proper, a vertex just needs to check that its color output by the algorithm differs from its neighbors'. An algorithm is valid if its output is correct. In what follows, we will only consider valid algorithms.

Distributed recoloring in the LOCAL model is defined as follows. Each vertex $v$ is given as input its initial color $c_0$ and its target color $c_{end}$. 
It outputs a schedule $c_{0},c_{1},\ldots,c_{\ell}=c_{end}$ of length $\ell$, which is the list of colors taken by $v$ all along the transformation. The output is correct if this schedule corresponds to a recoloring by independent sets.
In one communication round, each vertex can check that the schedule is consistent by checking that at each step: (i) its color differs from its neighbors', and (ii) if its color changes at some step $i>0$ (i.e. $c_{i-1}\neq c_i$), then the color of none of its neighbors is modified at that same step. 

When we handle $r$-locally non-frozen colorings in the distributed setting, a vertex is given as input its distance to a closest unfrozen vertex in both the initial and target colorings. The input validity can be checked in one round, as each vertex just needs to check that (i) both colorings are locally proper (around its vertex), and that (ii) it is unfrozen if the integer assigned to it is $0$, or it has a neighbor at a smaller distance if its distance is positive.

\section{Outline of the proofs}
\label{subsec:proof-outline}

The proofs of both our Theorems~\ref{thm:main-global} and~\ref{thm:main-local}  are in two steps.
The first step is slightly different, but the second step is the same for both results.
%
\paragraph{First step.}

The first step consists in reaching a coloring where the vertices of a fixed set $I$ are all non-frozen. 
For Theorem~\ref{thm:main-global} (centralized recoloring), this step corresponds to the following proposition, where we start from a non-frozen coloring.

\begin{restatable}{proposition}{PropCentralizedWarming}
\label{prop:centralized-warming}
Let $G$ be a connected graph of maximum degree $\Delta \ge 3$, and let $I$ be a maximal independent set at distance $d\geq 15$ in $G$. Let $\sigma$ be a coloring of $G$ that is non-frozen. Then it is possible to transform $\sigma$ into a coloring $\mu$ where $I$ is non-frozen, with $O(n)$ single vertex recolorings.
\end{restatable}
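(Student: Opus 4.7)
The strategy is to transport non-frozenness from the single non-frozen vertex available in $\sigma$ to every vertex of $I$ using ladders (Remark~\ref{rk:unfreeze-neighbor}), while exploiting the distance condition $d\geq 15$ to prevent the local operations unfreezing distinct vertices of $I$ from interfering with one another.

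I would start by picking a non-frozen vertex $v_0$ of $\sigma$, fixing a spanning tree $T$ of $G$ rooted at $v_0$, and ordering $I=\{u_1,\dots,u_m\}$ along a DFS traversal of $T$. The $u_i$'s are processed sequentially, maintaining the invariant that after $u_i$ has been processed, every $u_j$ with $j\leq i$ is non-frozen.

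Processing $u_i$ consists of two stages. In the \emph{transport stage}, a succession of ladders along the tree path from the currently non-frozen region to the ball $B(u_i,r)$ is used to bring a non-frozen vertex within distance $r$ of $u_i$, for a suitable constant $r=r(\Delta)<\lfloor d/2\rfloor$. When the chosen sub-path is not entirely frozen on its interior, it is split at its non-frozen vertices into several induced sub-paths, each handled by its own ladder. In the \emph{local stage}, a recoloring sequence confined to $B(u_i,r)$ is applied to unfreeze $u_i$; its existence follows from the $r$-safe property of $(\Delta+1)$-colorings, to be proved in Section~\ref{sec:locallysafe}, once a non-frozen vertex is available inside that ball.

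Since $d\geq 15>2r$, the balls $B(u_j,r)$ are pairwise disjoint, so the local stage for $u_i$ never affects any previously unfrozen $u_j$. The DFS traversal ensures that every edge of $T$ is used in only a bounded number of transport stages (the tree is essentially traversed twice in total), so the cumulative length of all ladders is $O(n)$; the local stages contribute a further $O(|I|)\cdot f(\Delta)=O(n)$ recolorings, yielding the claimed bound.

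\textbf{Main obstacle.} The delicate point is engineering the transport stage so that it does not re-freeze any previously processed $u_j$ along the way, and so that the aggregate cost of all ladders is genuinely linear in $n$ rather than the naive $O(n\cdot\mathrm{diameter}(T))$ obtained by processing each $u_i$ independently. The combination of the DFS ordering of $I$, the gap condition $d\geq 15$ that isolates the local balls, and the $r$-safe property established in Section~\ref{sec:locallysafe} is what will make such a routing and charging argument possible.
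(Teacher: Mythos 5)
Your high-level plan shares much with the paper's: unfreeze $I$ one vertex at a time along a tree, use ladders and the $r$-safe theorem to move and duplicate non-frozenness, and rely on $d\geq 15$ to keep the local balls disjoint. But there is a genuine gap exactly where you flag the ``main obstacle,'' and you do not supply the idea needed to close it.

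The problem is the transport stage. You route ladders along the tree path of a spanning tree of $G$. Such a path can pass through, or near, the balls $B(u_j,r)$ of vertices $u_j\in I$ that have already been unfrozen, and a ladder recolors vertices of $N[u_j]$, which can refreeze $u_j$. The disjointness of the balls protects the \emph{local} stages from one another, but it does nothing to protect processed balls from the \emph{transport} stages, because the transport deliberately travels across long stretches of the graph. Saying that ``the combination of the DFS ordering, the gap condition, and the $r$-safe property'' will make this work is not an argument: none of these three ingredients prevents a tree path in $G$ from entering $B(u_j,7)$. The $r$-safe theorem lets you preserve a \emph{single} designated non-frozen vertex while unfreezing a boundary vertex of \emph{its own} ball; it says nothing about preserving an arbitrary collection of far-away non-frozen vertices that the ladder happens to graze.

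The paper's proof addresses precisely this point by working in an auxiliary graph $H$ on vertex set $I$, where $i\sim i'$ if there is a path of length at most $2d$ from $B(i,7)$ to $B(i',7)$ that avoids $B(i'',7)$ for every $i''\neq i,i'$. A short argument using the maximality of $I$ shows $H$ is connected. Processing $I$ along a BFS spanning tree of $H$, the path used to unfreeze the next vertex $i$ from its already-unfrozen parent $i'$ is, by construction of $H$, guaranteed not to enter any other ball, so no processed vertex can be refrozen. The cost is then $O(d)=O(1)$ per vertex of $I$, giving $O(n)$ in total without needing a DFS double-traversal charging argument. If you want to salvage your spanning-tree-of-$G$ route, you would have to show how to reroute or handle transport through occupied balls (e.g.\ by re-applying $r$-safeness inside each ball the path crosses and re-duplicating from $u_j$), and then redo the $O(n)$ charging argument for the modified scheme; as written, the proposal leaves this unproved.
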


For Theorem~\ref{thm:main-local} (distributed recoloring), the first step corresponds to the following proposition, where we start from an $r$-locally non-frozen coloring. 

\begin{restatable}{proposition}{PropDistributedWarming}
\label{prop:distributed-warming}
Let $G$ be a connected graph of maximum degree $\Delta \ge 3$, and let $I$ be a maximal independent set at distance $d\geq 15$ in $G$. 
Let $\sigma$ be an $r$-locally non-frozen coloring. 
Then it is possible to transform $\sigma$ into a coloring $\mu$ where $I$ is non-frozen, in $O(d\Delta^{4d+10}+d\log^*n+r)$ rounds, and with a schedule of length $O((r+d)d\Delta^{6d+10})$ in the LOCAL model.
\end{restatable}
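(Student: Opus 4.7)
The plan is to execute the strategy of Proposition~\ref{prop:centralized-warming} in the LOCAL model, in parallel around every vertex of $I$. The $r$-local non-frozenness of $\sigma$ guarantees that each $v\in I$ admits a non-frozen vertex $w_v$ at distance at most $r$, and a ladder (Remark~\ref{rk:unfreeze-neighbor}) from $w_v$ to $v$ unfreezes $v$. The new technical difficulty compared with the centralized setting is \emph{interference}: ladders invoked simultaneously for distinct $v,v'\in I$ must never attempt to recolor adjacent vertices at the same parallel step.

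First I would break symmetry globally by computing a proper coloring $\gamma\colon I\to[\Delta^{O(d)}]$ of $I$ inside a constant power of $G$. Since $I$ has pairwise distance $>d$, the graph induced by $I$ in $G^{O(d)}$ has maximum degree $\Delta^{O(d)}$, and Linial's algorithm produces such a $\gamma$ in $O(d\log^*n)$ rounds, which accounts for the $d\log^*n$ term in the round count. The color classes of $\gamma$ will be processed sequentially in the output schedule; within a single class the elements of $I$ are far enough apart in $G$ that their unfreezing operations can be executed in parallel.

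Next, each $v\in I$ gathers its $(r+O(d))$-neighborhood in $O(r)$ rounds, selects a closest non-frozen vertex $w_v$ (breaking ties by identifier) and computes a shortest $w_v$--$v$ path $P_v$. Rather than running a single ladder of length up to $r$, which may conflict with ladders from other $v'\in I$ when $r\gg d$, I would chain $O(r/d)$ \emph{short} moves of radius $O(d)$ along $P_v$, each advancing the ``non-frozen pivot'' by $\Theta(d)$ toward $v$ while recoloring only vertices inside a ball of that radius. The existence and correctness of such short moves is exactly what the $r$-safety analysis of Section~\ref{sec:locallysafe} provides; computing one locally requires inspecting a ball of radius $O(d)$ and enumerating configurations by brute force, contributing the $O(d\Delta^{4d+10})$ term to the round complexity.

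The main obstacle will be assembling these short moves into a genuine parallel schedule. I plan to argue that (i) inside a single $\gamma$-class the balls of radius $O(d)$ around simultaneously-acting pivots are disjoint, so their moves combine into an independent set of recoloring operations; (ii) between consecutive $\gamma$-classes the accumulated changes do not refreeze previously-unfrozen vertices of $I$, because our moves only alter vertices inside bounded balls that have already been fully processed; and (iii) each pivot move contributes at most $O(\Delta^{O(d)})$ parallel steps to the schedule. Multiplying the number of $\gamma$-classes ($\Delta^{O(d)}$), the number of pivot hops ($O((r+d)/d)$), and the cost per hop yields the advertised schedule length $O((r+d)d\Delta^{6d+10})$, while the round complexity $O(d\Delta^{4d+10}+d\log^*n+r)$ follows by adding the three preliminary costs identified above.
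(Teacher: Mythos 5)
The high-level ingredients you use (auxiliary coloring of $I$ in a power of $G$, ladders, $r$-safeness, processing color classes of $I$ sequentially) are the right ones, but the way you assemble them has a genuine gap that the paper's proof is specifically designed to avoid.

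You propose, for each $v\in I$, to pick a closest non-frozen vertex $w_v$ at distance up to $r$ and then run a ladder (in $O(r/d)$ short hops) along a shortest path $P_v$ from $w_v$ to $v$. Your parallelization argument then claims that ``inside a single $\gamma$-class the balls of radius $O(d)$ around simultaneously-acting pivots are disjoint.'' This is where the argument breaks. The coloring $\gamma$ only separates the \emph{endpoints} $v,v'\in I$ in a power of $G$; it says nothing about the pivots, which live on the paths $P_v$ and are typically at distance up to $r\gg d$ from $v$. Two members $v,v'$ of the same $\gamma$-class can easily have paths $P_v$, $P_{v'}$ that cross, run close to each other, or share the very same source $w_v=w_{v'}$ (in which case the pivots coincide at step zero). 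Nothing in your scheme makes the pivots stay apart as they advance, and there is no local rule to resolve the conflict, because whether two pivots will collide after $\Theta(r/d)$ hops is not information available within an $O(d)$-radius view. Your point (ii), that already-unfrozen vertices of $I$ are never refrozen, has the same problem: a ladder passing near an unfrozen $u\in I$ can refreeze it, and the paper handles this by an explicit ``twist'' that aborts a ladder whenever it would do so, which is much easier to implement precisely because in the paper's scheme each ladder is confined to an $O(d)$-radius ball.

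The paper avoids the long-range interference entirely by never running a ladder of length more than $O(d)$. It first moves the set of non-frozen vertices onto (a subset of) $I$ using ladders of length at most $d$ inside disjoint balls $B(u,d+1)$ (one claim), and then \emph{propagates non-frozenness inside $I$}: in each of at most $r+d$ rounds, an unfrozen $u\in I$ unfreezes the vertices of $I$ at distance at most $2d+4$ from it, again confined to disjoint sets $X_u$, decreasing the maximum distance from any $v\in I$ to the nearest unfrozen vertex of $I$ by at least one at each iteration. All recoloring happens in balls of radius $O(d)$ around vertices of $I$, and these are kept disjoint by the auxiliary coloring — this is exactly the locality invariant your pivots fail to maintain. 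If you want to salvage your plan, you would need either to restrict the target of your ladders to the nearest unfrozen vertex that is already \emph{in} $I$ (which is essentially the paper's iteration) or to add a nontrivial local arbitration rule for colliding pivots; as written, the disjointness claim is unjustified.
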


Actually, the proofs of both Proposition~\ref{prop:centralized-warming} and~\ref{prop:distributed-warming} will use as an essential building block the following theorem, which is of independent interest.

\begin{restatable}{theorem}{ThmLocalDuplication}
\label{thm:local-duplication}
For every $r \geq 7$ and every graph $G$ of maximum degree $\Delta \ge 3$, $G$ is $r$-safe.
\end{restatable}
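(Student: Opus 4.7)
The plan is to use the ladder machinery (the Definition of ladder and the Remark preceding the theorem). Fix a non-frozen vertex $v$, a vertex $w$ at distance exactly $r \geq 7$ from $v$, and a shortest (hence induced) $v$-$w$ path $P = v_0 v_1 \cdots v_r$ with $v_0 = v$ and $v_r = w$. The goal is to recolor $w$ using only vertices of the interior of $B(v,r)$ together with $w$ itself, while keeping $v$ non-frozen.

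I will first dispatch the easy case. If any $v_j$ with $j \geq 2$ is non-frozen, I select the largest $j$ such that $v_{j+1},\ldots,v_{r-1}$ are all frozen and perform a ladder along $v_j v_{j+1}\cdots v_r$ (if $v_r=w$ itself is non-frozen, I simply recolor $w$ directly). All vertices recolored other than $w$ lie at distance $\geq 2$ from $v$, hence in the interior of $B(v,r)$, and $N[v]$ is untouched, so $v$ trivially stays non-frozen.

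The substantive case is when every interior vertex of $P$ between $v$ and $w$ is frozen, so any recoloring reaching $w$ must propagate all the way from $v$ through $v_1$. The key observation I will use is this: when we run a ladder starting at $v$ through some neighbor $v'_1$ of $v$, the ladder ends with $v$ holding a chosen new color $c'$ and $v'_1$ holding $v$'s old color $c_0$. The net effect on the palette of $N[v]$ is to swap $v'_1$'s old color $c'_1$ for the pair $\{c_0, c'\}$ (where $c_0$ enters via $v'_1$ and $c'$ via $v$). A direct check shows that if $c'_1$ is unique in $N(v)$ and $c'$ is taken to be a pre-existing free color of $v$, then the new palette has the same size as the original, so $v$ remains non-frozen (indeed with $c'_1$ as a new free color). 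Hence the main strategy is: ladder through a neighbor $v'_1$ of $v$ whose color is $N(v)$-unique and which lies on a shortest $v$-$w$ path (equivalently, $d(v'_1,w)=r-1$); such a ladder is contained in $B(v,r)$ by construction.

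The main obstacle is the pathological sub-case in which no such $v'_1$ exists. This forces $v$ to have degree exactly $\Delta$ with a unique free color $c^*$ and a color $c_1$ duplicated in $N(v)$, with every neighbor of $v$ on a shortest $v$-$w$ path carrying the duplicated color $c_1$. To handle it, I intend to use the slack afforded by $r \geq 7$ to perform a bounded-length preparatory sequence of recolorings confined to $B(v, c)$ for a small constant $c$ (around $c \leq 4$), leaving at least a $3$-hop suffix of $P$ available for the final main ladder. The preparatory phase begins by recoloring $v$ to $c^*$, which by the Remark unfreezes every frozen neighbor of $v$ (in particular the ``twin'' $u$ of $v_1$ in $N(v)$), and then uses a short ladder started at $u$ or at a non-frozen vertex now available at distance $2$ from $v$ in order to eliminate the duplication of $c_1$ in $N(v)$ while maintaining a free color for $v$. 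The crux will be a short enumeration of the possible local configurations of $\{v_1, v_2, u\}$ (distinguishing for example whether $u$ is adjacent to $v_1$, whether $u$ lies on some shortest path to $w$, and what colors appear at distance $2$ from $v$), with each sub-configuration resolved by an explicit finite sequence of recolorings that reduces the situation to the easy case above.
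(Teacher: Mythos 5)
There is a genuine gap. Your easy case (a non‑frozen $v_j$, $j\geq 2$, on $P$) matches the paper's Lemma~\ref{lem:Pnonfrozen}, and your ``unique‑color neighbor'' ladder is a correct observation (I checked the palette swap: $v$ gains $c_1'$ as a new free color). But the resolution of the pathological sub‑case is sketched, not proved, and the sketch as written does not close. Concretely: after recoloring $v\to c^*$, the Remark only guarantees that \emph{frozen} neighbors of $v$ become unfrozen. The twin $u$ of $v_1$ (the other $c_1$‑colored vertex in $N(v)$) may have been \emph{non‑frozen} with $c^*$ as its unique free color and with $c_0$ appearing twice in $N[u]$ (once at $v$ and once at some $z\in N(u)\setminus\{v\}$); in that situation $u$ becomes frozen after $v\to c^*$, so no short ladder can be started at $u$. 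One then has to recurse: to unfreeze $u$ one would want to recolor $z$, but $z$ may be frozen too, and the same obstruction can repeat. Your proposed enumeration is restricted to the configuration of $\{v_1,v_2,u\}$, which does not see this cascade. (Also, the characterization ``no good $v_1'$ forces $v$ to have degree exactly $\Delta$ with a unique free color'' is not right as stated: $v$ could have degree $\Delta$ and two free colors yet still have every shortest‑path neighbor carrying a repeated color; that sub‑case is easy, but it shows the dichotomy is not quite the one you wrote.)

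The paper avoids this cascade with a different mechanism that looks \emph{away} from $v$ rather than at $B(v,4)$. It runs the ladder from $v$ (or $v_1$) down the whole path and then argues in two regimes: if the colors along $P$ are not $3$‑periodic (Lemma~\ref{lem:Pperiodic}), the shifted coloring has a newly non‑frozen vertex $v_{i+2}$ in the middle of $P$, and a second ladder from $v_{i+2}$ back to $v$ restores $v$'s slack while leaving $w$ recolored; if the colors \emph{are} $3$‑periodic, then $v_5$ (well in the interior since $r\geq 7$) cannot have degree $2$, so it has an off‑path neighbor $z$, and a short case analysis on $|N(z)\cap P|$ produces either an unfrozen $z$ in the shifted coloring or a non‑$3$‑periodic replacement path through $z$. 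That step is exactly the part your proposal is missing: the reason $v$ gets re‑frozen is a global property of the colors along $P$, and the fix comes from the path structure (and the $\Delta\geq 3$ hypothesis) at distance $5$, not from the configuration near $v$.
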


Remember that this means that given a $(\Delta+1)$-coloring, if a vertex $x$ is non-frozen, and $y$ is a vertex at distance $7$ from $x$, then we can recolor the vertices of $B(x,7)$ such that:
\begin{enumerate}[label=(\roman*)]  \setlength\itemsep{0.5pt}
    \item at the end of the transformation, $x$ is still non-frozen,
    \item only vertices in $B(x,6)$ and $y$ are recolored,
    \item $y$ is non-frozen at the end of the transformation.
\end{enumerate}
Informally speaking, the result ensures that, in $(\Delta+1)$-colorings, we can locally ``duplicate'' non-frozen vertices.

\paragraph{Second step.}
The second step, which is common to both theorems, consists in reaching a fixed coloring~$\gamma$, and corresponds to the following proposition.


\begin{restatable}{proposition}{propMain}
\label{prop:main}
Let $G$ be a graph with $\Delta \ge 3$ and $I$ be an independent set at distance~$28$.
Let $r,k,k' \in \mathbb{N}$ such that $k' < k$, $k \ge \Delta+1$. Let $\mu, \gamma$ be two colorings, using respectively at most $k$ and $k'$ colors, that are both non-frozen on $I$. 
There is a recoloring schedule from $\mu$ to $\gamma$ of length at most $(k')^{O(\Delta)}$.
Moreover, such a recoloring schedule can be computed in $O(\Delta)$ rounds in the LOCAL model. 
\end{restatable}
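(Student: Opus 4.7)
The plan is to build a parallel schedule from $\mu$ to $\gamma$ whose length is $(k')^{O(\Delta)}$, exploiting both the spare color afforded by $k' < k$ and the non-frozen-ness of $I$ in the two colorings. First I would fix a color $c^* \in [k] \setminus [k']$ that does not appear in $\gamma$; this color will serve as a buffer throughout the schedule.

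The first stage of the schedule reduces the palette of the current coloring from the $k$ colors used by $\mu$ down to the $k'$ colors used by $\gamma$. For each color $c \in [k] \setminus [k']$ still present, the vertices carrying color $c$ form an independent set (a color class of the current coloring) and can be recolored simultaneously in one parallel round. In the generic case, each such vertex has an available color in $([k'] \cup \{c^*\}) \setminus \{c\}$ that avoids its neighborhood; in the tight regime where some color-$c$ vertex is frozen, I would first apply a short ladder (Remark~\ref{rk:unfreeze-neighbor}) from a nearby non-frozen vertex, which the set $I$ supplies in a controlled fashion through the local duplication result (Theorem~\ref{thm:local-duplication}). This stage contributes at most $O(k-k')$ parallel rounds, each of bounded depth.

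Once the current coloring $\nu$ uses only the $k'$-palette, I would transform $\nu$ into $\gamma$ by a local enumeration. Because $I$ is non-frozen in $\gamma$ (and can be restored non-frozen in $\nu$ via Theorem~\ref{thm:local-duplication}) and because the balls $B(u,14)$ for $u \in I$ are pairwise disjoint by the distance-$28$ hypothesis, each such ball provides an independent workshop where local rearrangements can be executed in parallel. The schedule dedicates one round to each of the $(k')^{O(\Delta)}$ possible colorings of a closed neighborhood $N[v]$: in that round, every vertex whose current local pattern matches the prescribed one performs a specific recoloring that moves it one step closer to $\gamma$. Because each vertex only needs to inspect its neighborhood at distance $O(\Delta)$ to determine its entire contribution, the whole construction runs in $O(\Delta)$ LOCAL rounds.

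The hard part will be the second stage: proving that enumerating local patterns yields a schedule that actually reaches $\gamma$ rather than some other $k'$-coloring. I expect to argue this via a monotone potential function (for instance, the number of vertices still disagreeing with $\gamma$) that strictly decreases across appropriate blocks of rounds, combined with the distance-$28$ property of $I$ to ensure that moves in different workshops cannot interfere, either by creating conflicts or by re-freezing an anchor in $I$. A secondary technical difficulty is maintaining $I$ non-frozen throughout the palette-reduction stage, which I would handle by interleaving local duplications from Theorem~\ref{thm:local-duplication} whenever a recoloring threatens to freeze a vertex of $I$.
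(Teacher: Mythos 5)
Your first stage (palette reduction) is a side issue and your second stage (local enumeration) is where the proof would actually have to happen, and it is not a proof. Saying that each round recolors ``every vertex whose current local pattern matches the prescribed one'' in a way that ``moves it one step closer to $\gamma$'' gives no mechanism for choosing the new color, no argument that such a move exists, and no reason the process terminates at $\gamma$. Your own candidate potential (number of vertices disagreeing with $\gamma$) cannot be monotone in this regime: with only $\Delta+1$ colors one routinely has to recolor a vertex \emph{away} from its target color to unblock a neighbor, and the whole difficulty of the theorem is to control that back-and-forth. You flag this as ``the hard part,'' but what you sketch for it is precisely the statement you are trying to prove, restated.

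The paper's route is structurally different and the ingredients you would need are missing from your plan. It removes the balls $B_I = \bigcup_{u\in I} B(u,7)$ and observes that $G' = G\setminus B_I$ is $(\Delta-1)$-degenerate with a degeneracy ordering built from distance layers to $B_I$, split by color classes into $O(r\cdot k)$ consecutive independent sets (Lemma~\ref{lem:ISdecomposition}). Lemma~\ref{lem:ISrecoloring} then recolors $G'$ to match $\gamma$, and Theorem~\ref{thm:local-duplication} is used only to clear conflicts on the boundary of the balls during that schedule, with $I$ staying non-frozen as an anchor. The balls are then recolored via the list-coloring Lemma~\ref{clm:recolball}, which needs a ``seed'' vertex with list size $d+2$; when no such vertex exists naturally, the paper extends the ball by a short path and recolors two nonadjacent vertices outside $B$ to the spare color $k$ --- this is the actual role of $k'<k$, not palette shrinking. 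None of the degeneracy, the list-coloring slack, or the seed construction appears in your proposal, so the $(k')^{O(\Delta)}$ bound and the $O(\Delta)$-round locality are asserted rather than derived.
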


Note that even if $k=\Delta+1$, a $k'$-coloring with $k'=\Delta$ exists, by Brook's theorem. Indeed, since $\sigma$ is non-frozen and $\Delta \ge 3$, $G$ is neither a clique nor an odd cycle.

We now have all the tools to establish our main results. Let us restate and prove our two main theorems. 

\ThmGlobal*

\begin{proof}
By Proposition~\ref{prop:centralized-warming}, we can transform $\sigma$ (resp. $\eta$) into a coloring $\mu$ (resp. $\mu'$) which is  non-frozen on $I$ by recoloring $O(n)$ vertices in total.

Let $\gamma$ be an arbitrary $\Delta$-coloring of $G$.
In order to build the recoloring sequence from $\mu$ to $\mu'$, we will build one from $\mu$ to $\gamma$, and one from $\gamma$ to $\mu'$. 
By Proposition~\ref{prop:main}, there is a recoloring schedule from $\mu$ (resp. $\mu'$) to $\gamma$ recoloring at most $\Delta^{O(\Delta)}$ independent sets. This sequence recolors at most $\Delta^{O(\Delta)}$ times each vertex, which completes the proof.
\end{proof}

The second theorem is about local reconfiguration, and we assume that the colorings are $r$-locally non-frozen.

\ThmLocal*

\begin{proof}
We first compute an independent set at distance $d=28$ in a distributed manner in time $O(\Delta^{28}+\log^* n)$ rounds in the LOCAL model by~\cite{BarenboimEK14}.
Then by plugging the result of Proposition~\ref{prop:distributed-warming} with $d=28$, we can transform $\sigma$ (resp. $\eta$) into a coloring $\mu$ (resp. $\mu'$) such that all the vertices of $I$ are non-frozen with a recoloring schedule of length $O(r \Delta^{178})$ in $O( \Delta^{122}+\log^* n+r)$ rounds. 

Assume first that $k \ge \Delta+2$. 
It is easy to transform $\mu$ into a coloring $\gamma$ with $k-1$ colors in one round: for every vertex that has color $\Delta+2$, move it to a color of smaller index. 
Such a color must exist, and the transformation takes only one round.
Now by Proposition~\ref{prop:main}, we can transform $\mu'$ into $\gamma$ efficiently, and finish this proof.

Now, if $k = \Delta+1$, we first compute an arbitrary $\Delta$-coloring, in time $O(\log^2n\log^2\Delta)$, using the algorithm of~\cite{GhaffariK22},
and then use Proposition~\ref{prop:main} twice (between $\mu$ and $\gamma$, and between $\mu'$ and $\gamma$).
\end{proof}



\section{Safeness and consequences}
\label{sec:locallysafe}

\subsection{Maximum degree at least 3 ensures safeness}

The goal of this section is to prove the following theorem, that ensures that, in any ball with an unfrozen vertex, we can unfreeze at a vertex of its border while keeping its center unfrozen. 

\ThmLocalDuplication*

Consider a graph $G$ of maximum degree $\Delta \ge 3$. 
Let $\sigma$ be proper $(\Delta+1)$-coloring of~$G$, and $v$ be a non-frozen vertex. Let $B = B(v, r)$.
Note that if the boundary of $B$ is empty (that is, the whole graph is contained in $B(v, r-1)$) then $G$ is $r$-safe. For the rest of the section, we will assume that this is not the case.

Let $w$ be a vertex of the boundary of $B$. Our goal is to prove that there exists a recoloring sequence of the vertices of the interior of $B$ plus $w$, which recolors $w$, and such that at the end of the sequence, $v$ is still non-frozen. Moreover, we will show that this recoloring sequence recolors each vertex at most twice (and at most $2r$ vertices in total).

We will call \emph{nice} such a recoloring sequence. The existence of a nice recoloring sequence implies Theorem~\ref{thm:local-duplication}. Let us first give some conditions which ensure the existence of a nice recoloring sequence.

\begin{lemma}
\label{lem:Pnonfrozen}
Let $P$ be a shortest path from $v$ to $w$. Assume that $P$ contains a non-frozen vertex not in $N[v]$. Then there is a nice recoloring sequence.
\end{lemma}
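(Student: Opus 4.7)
The plan is to split into two cases based on whether $w$ is frozen. Write $P = v_0 v_1 \cdots v_r$ with $v_0 = v$ and $v_r = w$. In the easy case where $w$ is itself non-frozen, I would simply recolor $w$ to any color missing from $N(w)$: this is a one-step nice sequence, since $w$ is the only vertex touched, it is recolored, it lies on the boundary of $B$, and, as $w$ is at distance $r \ge 7$ from $v$, no color on $N[v]$ is modified, so $v$ stays non-frozen.

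Assume now that $w$ is frozen. By hypothesis, at least one index $j \in \{2,\dots,r\}$ is such that $v_j$ is non-frozen; let $j^*$ be the largest such index. Since $v_r = w$ is frozen, $j^* \le r-1$, and by maximality all of $v_{j^*+1}, \dots, v_r$ are frozen while $v_{j^*}$ is non-frozen. Because $P$ is a shortest $v$-to-$w$ path, it is induced, and so is the subpath $Q = v_{j^*} v_{j^*+1} \cdots v_r$. This is exactly the setting where a ladder is defined: an induced path whose first vertex is non-frozen and whose remaining vertices are all frozen. Performing the ladder along $Q$ recolors each of $v_{j^*}, \dots, v_r$ exactly once; in particular $w$ gets a new color.

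It remains to check that the resulting sequence is nice. The recolored vertices are $v_{j^*}, \dots, v_{r-1}$, all lying in $B(v, r-1)$, together with $w$ on the boundary of $B$; so only the interior of $B$ and the vertex $w$ are touched. Since $j^* \ge 2$, the closed neighborhood $N[v]$ is disjoint from the ladder, so the multiset of colors on $N[v]$ does not change and $v$ remains non-frozen. Each vertex is recolored at most once, and at most $r$ vertices are recolored in total, well within the required bounds ``at most twice and at most $2r$ in total''.

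The only subtle point is verifying that the ladder can actually be pushed all the way to $w$, which rests on two ingredients already built into the ladder construction. First, $Q$ is induced (by the shortest-path choice of $P$), so when $v_{j^*+i}$ is recolored with the previous color of $v_{j^*+i-1}$, no other conflict arises along $Q$. Second, at each step Remark~\ref{rk:unfreeze-neighbor} guarantees that the next vertex on the ladder has just become non-frozen, while the remaining tail $v_{j^*+i+1}, \dots, v_r$ stays frozen because the only neighbor that changed lies strictly earlier on $Q$. This is the main thing to keep in mind in the write-up; everything else is bookkeeping on distances to $v$.
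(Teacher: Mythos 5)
Your proof is correct and follows essentially the same route as the paper: pick the non-frozen vertex on $P$ farthest from $v$ (at index $\ge 2$, hence outside $N[v]$ because $P$ is a shortest path), and recolor the ladder along the tail of $P$ from that vertex to $w$. The only cosmetic difference is that you split off the case where $w$ is itself non-frozen, whereas the paper treats it uniformly as the degenerate one-vertex ladder.
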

\begin{proof}
Let $z$ be  the non-frozen vertex of $P$ closest to $w$. By assumption, we know that~$z$ is not adjacent to~$v$. Let~$P'$ be the subpath from~$z$ to~$w$. 
We can recolor $w$ by recoloring a ladder along this path~$P'$. Let us check that this is a nice recoloring sequence. 
All the vertices of $P'$, except $w$, are in the interior of $B$, because $P$ is a shortest path from the center of the ball $B$ to $w$. Moreover, after this transformation $v$ is still non-frozen since none of its neighbors were recolored. 
Finally, every vertex is recolored at most once.
\end{proof}

We can extend this property to the vertices at distance 1 from the path $P$.

\begin{lemma}
\label{lem:Pneighbors}
Let $P$ be a shortest path from $v$ to $w$. Assume that there is a non-frozen vertex~$z$ adjacent to $P$, such that $3\leq d(v,z) \leq r-1$. Then there is a nice recoloring sequence.
\end{lemma}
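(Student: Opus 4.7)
The plan is to split on whether $P$ already contains a non-frozen vertex outside $N[v]$. If it does, Lemma~\ref{lem:Pnonfrozen} applies directly. Otherwise, I would use $z$ as a ``seed'' to inject non-frozenness onto $P$ at a single well-chosen vertex, and then propagate to $w$ by a ladder along $P$ itself.

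Write $P = u_0 u_1 \cdots u_r$ with $u_0 = v$ and $u_r = w$. If some $u_j$ with $j \geq 2$ is non-frozen, or if $z$ itself lies on $P$ (in which case $z = u_j$ for some $j \geq 3$ because $d(v,z) \geq 3$), then Lemma~\ref{lem:Pnonfrozen} already yields a nice sequence. So I may assume every $u_j$ with $j \geq 2$ is frozen and $z \notin P$. Among the neighbors of $z$ on $P$, I would pick $p = u_i$ with $i$ maximal. Because $P$ is a shortest path, any two neighbors of $z$ on $P$ are at distance at most $2$ along $P$ (triangle inequality through $z$), so no $u_j$ with $j > i$ is adjacent to $z$. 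From $d(v,z) \geq 3$ and $p \in N(z)$, we get $d(v,p) \geq 2$, hence $p \notin N[v]$ and $p$ is frozen under the current assumption.

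The proposed recoloring sequence is: first recolor $z$ to any color not appearing in $N[z]$ (possible since $z$ is non-frozen), then run a ladder along the induced subpath $u_i u_{i+1} \cdots u_r$ of $P$. To see that step one unfreezes $p$, note that $p$ being frozen in a $(\Delta+1)$-coloring means each of the $\Delta+1$ colors appears exactly once in $N[p]$; removing the old color of $z$ from this multiset leaves a color missing in $N[p]$, so $p$ becomes non-frozen (this is exactly the mechanism of Remark~\ref{rk:unfreeze-neighbor}). By the maximality of $i$, none of $u_{i+1}, \ldots, u_r$ is adjacent to $z$, so they remain frozen and the prerequisite of the ladder definition holds. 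The ladder then recolors each $u_j$ for $j = i, \ldots, r$ exactly once, in particular $u_r = w$.

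It remains to check niceness. Since $d(v,z) \geq 3$ and $d(v, u_j) = j \geq 2$ for every $j \geq i \geq 2$, no neighbor of $v$ is ever recolored, so $v$ retains the color missing from $N[v]$ and stays non-frozen. All recolored vertices other than $w$ lie at distance at most $r-1$ from $v$ (using $d(v,z) \leq r - 1$ for $z$, and the shortest-path property of $P$ for the $u_j$'s), so they sit in the interior of $B(v,r)$. Each vertex is touched once, giving at most $1 + (r - i + 1) \leq r + 1 \leq 2r$ recolorings in total. The main delicate point is the maximality choice for $i$: without it, the initial recoloring of $z$ could inadvertently unfreeze some intermediate $u_j$ on the tail of $P$ in an uncontrolled way, which would invalidate the hypothesis needed to run the ladder; everything else is bookkeeping.
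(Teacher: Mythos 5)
Your proof is correct and follows essentially the same approach as the paper: pick the neighbor $p=u_i$ of $z$ on $P$ that is furthest from $v$, recolor $z$ first, then ladder along $P$ from $u_i$ to $w$ — which is exactly the paper's single ladder starting $z, z', \ldots, w$ (the paper's ``$z'$ closest to $w$'' is your ``$i$ maximal''). You are slightly more explicit than the paper in two places — reducing via Lemma~\ref{lem:Pnonfrozen} to the case where all $u_j$, $j\ge 2$, are frozen, and invoking Remark~\ref{rk:unfreeze-neighbor} to justify that recoloring $z$ unfreezes $p$ — but these are exactly the steps the paper leaves implicit with ``the argument is similar to the one of Lemma~\ref{lem:Pnonfrozen},'' so the substance is the same.
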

\begin{proof}
The argument is similar to the one of Lemma~\ref{lem:Pnonfrozen}. Let $z$ be a vertex satisfying the conditions of the lemma, that is the closest to $w$. Note that  $z$ is in the interior of $B$, since $d(v,z) \le r-1$. Let $z'$ be the neighbor of $z$ in $P$ which is the closest to $w$, then $z'$ is at distance at least $2$ from $v$, in particular, it is not a neighbor of~$v$. Then, we can again recolor along a ladder that starts with $z$, $z'$, and then continues along $P$ towards~$w$. This allows us to recolor $w$ while leaving the neighbors of~$v$ and the boundary of $B$ untouched. Each vertex is recolored at most once, which implies that this is a nice recoloring sequence.
\end{proof}

\begin{lemma}
\label{lem:Pperiodic}
Let $P = v_0, \ldots, v_r$ be a shortest path from $v$ to $w$. If there is an index $2 \le i \leq r - 3$ , such that $\sigma(v_i) \neq \sigma(v_{i+3})$, then there is a nice recoloring sequence.
\end{lemma}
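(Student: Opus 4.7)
The plan is to first reduce to the case where neither Lemma~\ref{lem:Pnonfrozen} nor Lemma~\ref{lem:Pneighbors} applies (otherwise the conclusion follows immediately). Under this reduction, $v_j$ is frozen for every $j \in \{2,\ldots,r\}$, and no vertex adjacent to $P$ at distance in $[3,r-1]$ from $v$ is non-frozen. I would then observe that since each interior vertex of $P$ is frozen (all $\Delta+1$ colors appearing exactly once in its closed neighborhood), its two path-neighbors must carry colors distinct from each other and from the vertex itself; hence any three consecutive vertices on $P$ have pairwise distinct colors. Combined with $\sigma(v_i)\neq\sigma(v_{i+3})$, this yields four pairwise distinct colors $a,b,c,d$ at $v_i,v_{i+1},v_{i+2},v_{i+3}$ (which fits because $\Delta+1\geq 4$).

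Next I would build a ladder from $v_0$ along $P$ to $v_r = w$: pick a color $c^*$ missing from $N[v_0]$, recolor $v_0\to c^*$, and then for each $j\geq 1$ recolor $v_j \to \sigma(v_{j-1})$. Each step is valid by the frozen hypothesis at $v_j$: the color $\sigma(v_{j-1})$ appeared exactly once in $N[v_j]$, at $v_{j-1}$, and becomes available at $v_j$ as soon as $v_{j-1}$ has been recolored. So $w$ is successfully recolored by this sequence, and each vertex on $P$ is touched only once, which addresses the primary requirement of niceness.

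The main obstacle is to guarantee that $v = v_0$ is non-frozen at the end. A direct ladder may in fact freeze $v_0$ precisely in the subcase where $v_0$ has a second neighbor $v'\neq v_1$ with color $\sigma(v_1)$ and $v_0$ had only one missing color. To handle this, I would leverage the four-distinct-color block at $v_i,\ldots,v_{i+3}$: the hypothesis prevents the ladder from acting as a trivial rotation of a $3$-periodic pattern, and in particular one can track that some intermediate vertex around position $i+3$ becomes non-frozen with missing color $d$ during the ladder. Using this extra missing color, I would append a short supplementary recoloring near $v_0$ (either a second recoloring of $v_1$ or a single recoloring of $v'$, whichever is valid in the subcase), chosen so that one color leaves $N[v_0]$ at the end. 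The delicate part is the case analysis: verifying that such a supplementary step is valid across all subcases of $v_0$'s local coloring, that it only touches vertices in the interior of $B$, and that every vertex is recolored at most twice, so that the whole sequence qualifies as nice.
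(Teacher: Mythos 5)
Your setup is correct and matches the paper's: reduce via Lemmas~\ref{lem:Pnonfrozen} and~\ref{lem:Pneighbors}, observe that the three consecutive frozen vertices force distinct colors and that the hypothesis $\sigma(v_i)\neq\sigma(v_{i+3})$ yields four pairwise distinct colors on $v_i,\ldots,v_{i+3}$, and perform a first ladder along $P$ to recolor $w$. You also correctly identify the crux: $v_0$ may be frozen after this ladder, and the four-distinct-color block is what guarantees that some interior vertex (namely $v_{i+2}$) becomes non-frozen in the resulting coloring $\eta$, because the color $c=\sigma(v_{i+3})$ disappears from $N[v_{i+2}]$ after the shift.

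The gap is in the finishing move. You propose to ``append a short supplementary recoloring near $v_0$'' (recoloring $v_1$ a second time, or a neighbor $v'$ of $v_0$), but this cannot work: after the first ladder, by the reduction via Lemma~\ref{lem:Pnonfrozen} and the fact that $v_0$ and $v_1$ are now frozen, there is no non-frozen vertex in $N[v_0]$, so no single local recoloring near $v_0$ is available. The non-frozen vertex is $v_{i+2}$, which lives deep inside the path, and the only way to exploit it is to run a \emph{second ladder} from $v_{i+2}$ back to $v_0$ along $P$. That second ladder does not touch $w$ or the boundary, unfreezes $v_0$, and recolors each vertex at most twice --- exactly what is needed for a nice sequence. (A minor secondary point: your ladder always starts at $v_0$, but when $v_1$ is non-frozen in $\sigma$, the step $v_1\to\sigma(v_0)$ need not be valid because $\sigma(v_0)$ may also appear on another neighbor of $v_1$; the paper handles this by starting the ladder at $v_1$ in that case.)
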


\begin{proof}
By Lemma~\ref{lem:Pnonfrozen}, we can assume that all the vertices of $P$, except for $v=v_0$ and its neighbor $v_1$, are frozen. 
Let us denote by $\eta$ the coloring obtained by recoloring the ladder along $P$, starting either from $v$, if $v_1$ is frozen, or $v_1$, if it is non-frozen, and ending in $w$. 
In $\eta$, we have recolored $w$, but now $v$ might be frozen. If $v$ is not frozen, we are done. If $v_1$ is non-frozen, then again we are done, since we can make a ladder with just $v_1$ and~$v$. Thus, let us assume that both $v_1$ and~$v$ are frozen in $\eta$.

Amongst the indices $2 \le i \leq r - 3$ such that $\sigma(v_i) \neq \sigma(v_{i+3})$, let $i$ be the minimum one. We have the following claim:
\begin{claim}\label{clm:ii+3}
The vertex $v_{i+2}$ is non-frozen in the coloring $\eta$.
\end{claim}
\begin{proof}
Let $c=\sigma(v_{i+3})$. 
Let us make a few remarks: 
\begin{enumerate}\setlength{\itemsep}{0pt}
    \item $\sigma(v_{i+2})\neq c$, because $\sigma$ is a proper coloring,
    \item $\sigma(v_{i+1})\neq c$, because $v_{i+2}$ is frozen in $\sigma$. More generally, none of the neighbors of $v_{i+2}$ except $v_{i+3}$ has color~$c$.
    \item  $\sigma(v_i) \neq c$, because $\sigma(v_i) \neq \sigma(v_{i+3})$ by assumption.
\end{enumerate}

Now, by construction and by the properties of ladders, we have $\eta(v_{j+1}) = \sigma(v_j)$, for every vertex $v_j$ of the ladder, except $v_r=w$. 
Transposing the remarks above about $\sigma$ to $\eta$ we get that: 

\begin{enumerate}\setlength{\itemsep}{0pt}
    \item $\eta(v_{i+3})\neq c$,
    \item $\eta(v_{i+2})\neq c$, and more generally, no neighbor of $v_{i+2}$ has color $c$.
    \item $ \eta(v_{i+1}) \neq c $.
\end{enumerate}

Consequently, $c$ does not appear in the closed neighborhood of $v_{i+2}$ in $\eta$, which implies that $v_{i+2}$ is non-frozen in $\eta$, as claimed.
\end{proof}

By Claim~\ref{clm:ii+3}, $v_{i+2}$ is non-frozen in $\eta$. 
We can make a new ladder in $\eta$ along the path $P$ from $v_{i+2}$ to $v$. The vertex $w$ is not recolored by this ladder, and at the end $v$ is non-frozen. 
Since every vertex is recolored at most twice, we get a nice recoloring sequence.
\end{proof}

We now have all the tools to prove that a nice recoloring sequence always exists. Let us assume that we do not fall in one of the previous cases. 
Let $P = v_0, \ldots, v_r$ be a shortest path from $v$ to $w$.
By Lemma~\ref{lem:Pnonfrozen}, all the vertices in $P$ but the first two ones are frozen. 
By Lemma~\ref{lem:Pneighbors}, all the neighbors of $P$ that are at distance at least three from $v$ are frozen. 
Free to rename colors, Lemma~\ref{lem:Pperiodic} ensures that $\sigma(v_i) = i \mod 3$ for every $i \ge 2$. 
We denote by $\eta$ the coloring obtained by recoloring the ladder along~$P$ starting either from $v$, if $v_1$ is frozen, or from $v_1$ otherwise.
As before, at that point we are done except if both $v$ and $v_1$ are frozen in $\eta$.
Note that, for $i \ge 3$, $\eta(v_i) = (i -1) \mod 3$, because of the color shift of the ladder. 

Let us consider the vertex $v_5$. 
It cannot have degree $2$, because it is frozen in $\sigma$, and no degree-2 vertex can be frozen in a $\Delta+1$-coloring, with $\Delta\ge3$. Hence, we can assume that $v_5$ has a neighbor~$z$ outside $P$. And because $P$ is a shortest path, $z$ is at distance at least $4$ from $v$. Also note that, since we assume that $r\geq 7$, $d(v,z)\leq d(v,v_5)+1 \leq r-1$.
Therefore, by Lemma~\ref{lem:Pneighbors}, $z$ is frozen in~$\sigma$. 
We will use the following claim:

\begin{claim}
\label{clm:z-non-frozen}
If $z$ is non-frozen in $\eta$, then a nice recoloring exists.
\end{claim}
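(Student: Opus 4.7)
The plan is to exhibit a second ladder, starting from $z$ and ending at $v$, that composes with the first ladder (which produced $\eta$ from $\sigma$ and recolored $w$) to give a nice recoloring. The key data we have is that $z$ is non-frozen in $\eta$, lies at distance at most $6$ from $v$, and because $r\geq 7$, sits strictly inside the interior of $B(v,r)$. So there is room to run a ladder back from $z$ to $v$ without ever disturbing $w$ or leaving $B$.

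Concretely, I would fix a shortest path $Q=q_0 q_1\cdots q_\ell$ from $q_0=z$ to $q_\ell=v$, with $\ell=d(v,z)\leq 6\leq r-1$. Being a shortest path, $Q$ is induced (and so are its suffixes), and every vertex of $Q$ lies in $B(v,6)\subseteq B(v,r-1)$; in particular $w\notin Q$. Then I would select $i$ to be the largest index in $\{0,1,\ldots,\ell-1\}$ for which $q_i$ is non-frozen in $\eta$. This $i$ is well-defined since $q_0=z$ qualifies by hypothesis. By maximality of $i$, the suffix $q_i,q_{i+1},\ldots,q_\ell$ has $q_i$ non-frozen, $q_{i+1},\ldots,q_{\ell-1}$ all frozen in $\eta$, and $q_\ell=v$ also frozen in $\eta$ (this is the standing case assumption of this branch of the argument). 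This is precisely the data needed to apply the definition of a ladder, and by the basic property of ladders $v$ becomes non-frozen after the ladder terminates.

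Finally I would check the accounting. Composing the first ladder (along $P$) with this second ladder gives a sequence that recolors $w$ exactly once (only in the first ladder, as $w\notin Q$), leaves $v$ non-frozen, and touches only vertices in $B(v,r-1)\cup\{w\}$. Any vertex recolored by both ladders—necessarily a vertex of $P\cap Q$—is recolored exactly twice and no more, so the sequence is nice. The one step I expect to require a little care is making sure the second ladder neither escapes the interior nor hits $w$ even when $Q$ happens to reuse some $v_j$ on $P$; but the crude distance bound $d(v,z)\leq 6< r$ together with the shortest-path property of $Q$ pin every intermediate vertex strictly inside $B(v,r)$, which is precisely why the hypothesis $r\geq 7$ was built into the definition of safeness.
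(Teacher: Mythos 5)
Your proof is correct and follows essentially the same approach as the paper: from $\eta$, run a second ladder from $z$ back to $v$, note that the union of the two ladders lies entirely in $B(v,r-1)\cup\{w\}$ with $w$ recolored exactly once, and that $v$ is non-frozen at the end. Your extra care in truncating $Q$ to its suffix starting at the last non-frozen vertex is a small clarification of what the paper leaves implicit (since the ladder definition requires all internal vertices to be frozen), but it does not change the substance of the argument.
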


\begin{proof}
Indeed, from $\eta$, we can recolor along a ladder from $z$ to $v$. After this operation, no other vertex of the boundary is recolored, $v$~is non-frozen, and each vertex has been recolored at most twice. Hence, this defines a nice recoloring sequence.
\end{proof}

We make a case analysis depending on the number of neighbors of $z$ in~$P$.

\noindent
\textbf{Case 1:} $z$ has a single neighbor in $P$. Since $z$ is frozen in $\sigma$, $v_5$ is its only neighbor colored with $\sigma(v_5)$. In $\eta$, $v_5$ is recolored with a different color, which implies that $z$ is no longer frozen in $\eta$. By Claim~\ref{clm:z-non-frozen}, the conclusion follows.
\smallskip

\noindent
\textbf{Case 2:} $z$ has exactly two neighbors in $P$. Let $c_1$ and $c_2$ be the colors of these two neighbors in~$\sigma$. Since $z$ is frozen in~$\sigma$,  it does not have two neighbors colored with the same color. 
Moreover, in $\eta$, the two neighbors of $z$ in $P$ have color $c'_1 = c_1 -1 \mod 3$ and $c'_2 = c_2 - 1 \mod 3$ by Lemma~\ref{lem:Pperiodic} (since $z$ is incident to $v_5$, the other neighbor is at least $v_3$). Then we have $\{ c_1, c_2\} \neq \{ c'_1, c'_2\}$. It follows that $z$ is non-frozen in $\eta$, and the result follows from Claim~\ref{clm:z-non-frozen}.
\smallskip

\noindent
\textbf{Case 3:} $z$ has at least three neighbors in $P$. Since~$P$ is a shortest path, $z$ has exactly three neighbors in~$P$, and these neighbors are consecutive in~$P$. Let $3\leq i \leq 5$ such that~$v_i, v_{i+1}, v_{i+2}$ are the neighbors of~$z$ in~$P$. Since $z$ is adjacent to $v_{i+1}$, we have $\sigma(z) \neq \sigma(v_{i+1}) = (i+1) \mod 3$. Let $P'$ be the path obtained from $P$ by replacing $v_{i+1}$ by $z$. (Note that $z$ is not in the boundary of $B$, and then $z \ne w$.) 
Then $P'$ is a shortest path from $v$ to $w$, and since $\sigma(z) \neq (i+1) \mod 3$, we can apply Lemma~\ref{lem:Pperiodic} on~$P'$ to conclude. 
More precisely, if $i=3$, then $(i+1)+3$ is at distance at most $r$ because $r\geq 7$, and if $i=4$ or $5$, then $(i+1)-3\geq 2$, thus in both cases the lemma applies.

This concludes the proof, and proves Theorem~\ref{thm:local-duplication}.

\subsection{Consequences of Theorem~\ref{thm:local-duplication}}
The next lemma ensures that, in the centralized setting, we can obtain a 28-locally non-frozen coloring.

\PropCentralizedWarming*

\begin{proof}
We start by unfreezing a vertex of $I$.
Consider a pair of vertices $u,v$ such that $u$ is non-frozen and $v$ is in $I$, that minimize the distance $d(u,v)$. Note that if $u=v$,  we are done. Otherwise, we take a shortest path from $u$ to $v$ and build a ladder along this path to unfreeze $v$.

We construct an auxiliary graph $H$, where $V(H)=I$, and we put an edge $(i,i')$ in $H$ if there exists a path of length at most $2d$ from a vertex of $B(i,7)$ to a vertex of $B(i',7)$ in $G$ which does not contain any vertex in $B(i'',7)$ for any $i'' \ne i,i'$. Note that for any pair $a,b \in I$, $B(a,7)$ and $B(b,7)$ are disjoint, since $d\geq 15$.

\begin{claim}
The graph $H$ is connected.
\end{claim}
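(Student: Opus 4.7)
The plan is to introduce a Voronoi-style partition of $V(G)$ anchored at the elements of $I$, and to transport the connectivity of $G$ to $H$ through this partition. Fix an arbitrary linear order on $I$, and for each $i \in I$ let $V_i$ be the set of vertices $v$ for which $i$ is the smallest element of $I$ minimizing $d(v, \cdot)$. By maximality of $I$ as a distance-$d$ independent set, every vertex of $G$ lies within distance less than $d$ of some element of $I$, so $\{V_i\}_{i \in I}$ partitions $V(G)$ and every $v \in V_i$ satisfies $d(v, i) \leq d - 1$.

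Two properties of this partition drive the proof. \emph{Property (a):} for every $i \in I$ and every $u \in V_i$, every shortest $i$-to-$u$ path in $G$ lies entirely in $V_i$. For an intermediate vertex $w$ at distance $m$ from $i$ (hence at distance $\ell - m$ from $u$, where $\ell = d(i,u)$), the triangle inequality yields $d(w, j) \geq d(u, j) - (\ell - m)$ for each $j \in I$; combined with the definition of $V_i$ this gives $d(w, j) \geq m + 1 > m = d(w, i)$ for $j < i$ and $d(w, j) \geq m = d(w, i)$ for $j > i$, so $w \in V_i$ by the tie-breaking rule. \emph{Property (b):} $V_i \cap B(i', 7) = \emptyset$ for every $i' \in I \setminus \{i\}$, since otherwise some $v$ would lie in both, forcing $d(v, i) \leq d(v, i') \leq 7$ and hence $d(i, i') \leq 14 < 15 \leq d$, contradicting the pairwise distance assumption on $I$.

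Now consider the quotient graph $H'$ on vertex set $I$ with $i \sim_{H'} i'$ iff some edge of $G$ has one endpoint in $V_i$ and the other in $V_{i'}$. Since $\{V_i\}_{i \in I}$ partitions the connected graph $G$, the graph $H'$ is connected by the standard argument (any $G$-path between two cells projects to an $H'$-walk). It then suffices to show $E(H') \subseteq E(H)$: given an $H'$-edge $(i, i')$ witnessed by $uv \in E(G)$ with $u \in V_i$ and $v \in V_{i'}$, concatenate a shortest $i$-to-$u$ path, the edge $uv$, and a shortest $v$-to-$i'$ path. The total length is at most $(d-1) + 1 + (d-1) = 2d - 1 \leq 2d$; the endpoints are $i \in B(i, 7)$ and $i' \in B(i', 7)$; by property (a) the interior of the path lies in $V_i \cup V_{i'}$; and by property (b) it avoids $B(i'', 7)$ for every $i'' \in I \setminus \{i, i'\}$. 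Hence $(i, i') \in E(H)$, and $H$ inherits the connectedness of $H'$. The main subtlety to be careful with is the tie-breaking in the Voronoi partition, since without it shortest paths can escape their cell and property (a) fails; the linear-order convention above is exactly what makes the triangle-inequality argument go through.
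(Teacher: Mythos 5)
Your proof is correct, and it takes a genuinely different route from the paper's. The paper argues by contradiction: assuming $H$ is disconnected, it picks a pair $(i,j)$ of elements of $I$ lying in different components of $H$ that minimizes $d_G(i,j)$, shows this distance is at most $2d+1$ (otherwise the midpoint of a shortest $i$-$j$ path would violate the maximality of $I$), and then extracts from that shortest path a length-$\leq 2d$ witness for an $H$-edge between $i$ and $j$, contradicting the component separation. You instead construct all the required witness paths directly: a consistently tie-broken Voronoi partition $\{V_i\}_{i \in I}$, the observation (Property (a)) that shortest paths to the centre stay inside their cell, and the observation (Property (b)) that $V_i$ misses every $B(i',7)$ with $i' \neq i$, together let you lift every edge of the quotient graph $H'$ of the partition to an edge of $H$. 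Your approach is more constructive and makes the ball-avoidance clause in the definition of $E(H)$ fully explicit, whereas the paper's extremal argument is shorter but leaves the verification that its extracted subpath avoids $B(i'',7)$ for $i'' \neq i,j$ rather terse. One small point to be aware of: the concatenation of the two shortest paths with the edge $uv$ is a priori a walk, not necessarily a simple path; since any walk between two vertices contains a simple path on a subset of the same vertices (hence of the same or shorter length and still avoiding the forbidden balls), this is immediately repaired and does not affect the correctness of the argument.
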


\begin{proof}
Suppose the claim does not hold. Let $A$ be a connected component of $H$. Let $i \in A$ and $j\in I\setminus A$, such that $d_G(i,j)$ is minimum among the such pairs.
If $d(i,j) \geq 2d+2$, then the vertex in the middle of a shortest path between $i$ and $j$ is at distance at least $d+1$ from any vertex in $I$ which contradicts the maximality of $I$. 
So there exists  $i \in A$ and $j\in I\setminus A$ such that $d(j,i) \le 2d+1$. Now let $x$ be the last vertex in $B(i,7)$ and $y$ the last vertex in $B(j,7)$. We have $d(x,y) \le 2d$ which gives a contradiction.
\end{proof}

Now, let us denote by $T$ a spanning tree of $H$ rooted in $v$. Let $\tau$ be a BFS ordering of $T$. The index of a vertex of $H$ is its position of appearance in the BFS. Let $i$ be the first vertex of $\tau$ that is frozen. Note that if all the vertices of $I$ are non-frozen, we are done. Also note that $i$ cannot be the root of the tree since $v$ is non-frozen. 

\begin{claim}
By recoloring a constant number of vertices, we can unfreeze $i$, and this operation leaves the vertices of index $j$ smaller than $i$ in $\tau$ non-frozen.
\end{claim}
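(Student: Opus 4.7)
The plan is to exploit the parent $p$ of $i$ in the BFS tree $T$: since $p$ has smaller index than $i$ in $\tau$, the induction so far guarantees that $p$ is still non-frozen, and the edge $(p,i)\in E(H)$ provides a concrete short route along which to transfer non-frozen-ness from $p$ to $i$ without disturbing any other vertex of $I$.

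First I would materialize this route as a concrete path in $G$. Let $G^{\star}=G\setminus\bigcup_{i''\in I\setminus\{p,i\}}B(i'',7)$. The definition of the edge $(p,i)$ in $H$, combined with shortest paths inside $B(p,7)$ and $B(i,7)$ (all of which live in $G^{\star}$ since $I$ is at pairwise distance at least $15$, so $B(p,7)$ and $B(i,7)$ are disjoint from the forbidden balls), shows that $p$ and $i$ are connected in $G^{\star}$ by a walk of length at most $2d+14$. Taking a shortest $p$--$i$ path $Q$ in $G^{\star}$ yields a path that is induced in $G^{\star}$, hence also induced in $G$ since $G^{\star}$ is a vertex-induced subgraph. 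Since $d_{G}(p,i)\geq 15$, the length of $Q$ is at least $15$, and as one walks along $Q$ the $G$-distance to $p$ starts at $0$, ends at least $15$, and changes by at most one per step; so some vertex $x\in V(Q)$ satisfies $d_{G}(p,x)=7$.

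Next I would invoke Theorem~\ref{thm:local-duplication} ($7$-safeness) with center $p$ and boundary vertex $x$: this yields a recoloring sequence of length $O(\Delta^{6})$, confined to $B(p,6)\cup\{x\}$, at the end of which both $p$ and $x$ are non-frozen. Then I would let $x'$ be the vertex of $Q$ closest to $i$ along $Q$ (other than $i$ itself) that is non-frozen in the current coloring; such an $x'$ exists since $x$ already qualifies. The sub-path of $Q$ from $x'$ to $i$ is induced, its first vertex $x'$ is non-frozen, and every vertex strictly between $x'$ and $i$, as well as $i$ itself, is frozen by the choice of $x'$ and by the choice of $i$ as the first frozen vertex along $\tau$. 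A ladder along this sub-path recolors at most $|Q|\leq 2d+14$ further vertices and makes $i$ non-frozen.

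All recolored vertices lie in $B(p,7)\cup V(Q)$. By the pairwise distance-$15$ assumption on $I$ and the inclusion $V(Q)\subseteq V(G^{\star})$, this set is disjoint from $N[i'']$ for every $i''\in I\setminus\{p,i\}$, so no earlier element of $\tau$ distinct from $p$ has its closed neighborhood altered and thus remains non-frozen. For $p$ itself, the safeness step preserves its non-frozen status, and the subsequent ladder cannot touch $N[p]$: since $Q$ is induced and starts at $p$, the only neighbor of $p$ on $Q$ is its second vertex, whereas the ladder begins at $x'$, whose $Q$-position is at least that of $x$, which is itself at least $7$. The whole operation costs $O(\Delta^{6}+d)$ recolorings, a constant. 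The main technical subtlety is constructing $Q$ so as to be simultaneously induced in $G$ and disjoint from all the forbidden balls, so that the safeness step and the ladder can coexist along the same path without re-freezing $p$ or touching any other $i''$.
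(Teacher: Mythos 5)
Your proof is correct and follows essentially the same approach as the paper's: identify the parent $p$ of $i$ in the BFS tree $T$, materialize the $H$-edge $(p,i)$ as a short induced path $Q$ in $G$ avoiding the forbidden balls $B(i'',7)$, apply Theorem~\ref{thm:local-duplication} at the center $p$ to create a non-frozen vertex $x$ on $Q$ at distance $7$ from $p$, and then run a ladder along $Q$ to $i$. The paper phrases this with a case distinction (either a non-frozen vertex already exists on the path outside $B(i,7)$, or one invokes safeness at $i'$), whereas you unconditionally apply the safeness step and then pick the closest non-frozen vertex $x'$ to $i$; this is a cleaner packaging of the same idea, and your explicit argument that the ladder cannot touch $N[p]$ (via the induced-path property and the lower bound on the $Q$-position of $x'$) makes fully precise something the paper leaves implicit. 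One small inaccuracy: Theorem~\ref{thm:local-duplication} gives a recoloring sequence that recolors at most $2r=14$ vertices (each at most twice), not $O(\Delta^6)$ as you wrote; your bound is a gross over-estimate coming from $|B(p,6)|$, though it is still a constant and does not affect the conclusion.
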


\begin{proof}
Let $i'$ be the parent of $i$ in $T$, and let $P$ be the path  from $i$ to $i'$ in $G$ corresponding to the edge $(i,i')$ in $H$. By definition, $P$ has length at most $2d$ and does not intersect $B(i'',7)$ for any $i'' \ne i,i'$. 
Also, we can assume that $P$ is an induced path, since otherwise we can take a path on a subset of vertices of $P$, satisfying the same properties.  
If there is a vertex $y$ in $P \setminus B(i,7)$ that is unfrozen, we simply recolor a ladder from $y$ to $i$, to unfreeze $i$. Otherwise, let $x$ be the last vertex of $P$ in $B(i',7)$. By Theorem~\ref{thm:local-duplication}, by recoloring at most $14$ vertices, we can recolor $x$, leave $i$ unfrozen and while recoloring only vertices in $B(i',6)$ (and $x$). 
We can recolor a ladder from $x$ to $i$ to get the conclusion. 
In both cases, the recoloring sequence has length at most $2d+14$, and the non-frozen vertices of $I$ are kept unfrozen. \end{proof}

We iterate this construction to get all of $I$ non-frozen.
This requires at most $(2d+14) |I| \le (2d+14) n$ recoloring steps, therefore we get the result.  Note that since every vertex contains at most $\Delta^{2d}$ other vertices of $I$ at distance at most $2d$, every vertex is recolored at most $O(\Delta^{2d})$ times during the whole process.
\end{proof}

We now prove the following proposition, which is the local analogue of the previous proposition. Intuitively, it says that if we have a well-spread set of non-frozen vertices, we move it to another well-spread set locally.
 
\PropDistributedWarming*

Note that $r$ could be large and depend on $n$, in which case Proposition~\ref{prop:distributed-warming} not only moves the set of well-spread non-frozen vertices around, but also makes it more dense. 

\begin{proof}
Let $N$ be the set of non-frozen vertices at the beginning of the algorithm. We proceed in two steps: first, we show that we can somehow move the set of non-frozen vertices to a subset of $I$, and then we show how to unfreeze all the vertices of $I$.

For both steps, we will use an auxiliary coloring of the vertices of $I$. Note that this auxiliary coloring is just a tool and is independent of the coloring we are modifying. 
Let~$p$ be an integer.
Consider a graph $H$, whose vertex set is $I$ and whose edges are the pairs $(a,b)\in I$, such that $d_G(a,b)\leq p$. 
The graph $H$ has maximum degree $\Delta_H=O(\Delta^{p})$, thus we can compute a $(\Delta_H+1)$-coloring of $H$ in $O(\Delta_H + \log^*(|H|))$ rounds in $H$~\cite{BarenboimEK14}. 
Therefore, we can compute such an auxiliary coloring of $I$ in $G$ in $O(p\Delta^{p}+p\log^*n)$ rounds (in $G$). 

\begin{claim}
\label{claim:distributed-initial}
We can reach a coloring such that in the final coloring any vertex of $I$ is at distance at most $r+d$ from a non-frozen vertex of $I$, with a schedule of length $O(d\Delta^{2d+4})$ computed in $O(d\Delta^{2d+2}+d\log^*n)$ rounds.
\end{claim}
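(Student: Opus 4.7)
The plan is to use each initial non-frozen vertex to unfreeze a nearby vertex of $I$, and to parallelize these unfreezing operations via the auxiliary coloring of $I$.

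For each non-frozen vertex $u$ in $\sigma$, the maximality of $I$ ensures there is a closest vertex $i_u \in I$ at $G$-distance at most $d-1$; call $T = \{i_u : u \text{ is non-frozen in } \sigma\}$ the set of \emph{targets}. Combining $r$-local non-freezing with the maximality of $I$, every $i \in I$ lies within distance $r+d-1$ of some target: a non-frozen vertex $u$ at distance $\leq r$ from $i$ has $i_u$ at distance $\leq d-1$ from $u$, so $d(i,i_u) \leq r + d - 1$. Hence, once every target is non-frozen, the claim's condition is satisfied with room to spare.

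Next, I would compute an auxiliary coloring of $I$ on the graph $H$ whose vertices are $I$ and whose edges join $I$-vertices at $G$-distance at most $p=\Theta(d)$; the constant in $\Theta(\cdot)$ is chosen so that recoloring operations centered on distinct targets in a common color class (described below) act on disjoint and non-adjacent subgraphs of $G$. Using the algorithm of \cite{BarenboimEK14}, this produces a proper coloring of $H$ with $\Delta_H+1 = O(\Delta^{2d+2})$ colors in $O(d \Delta^{2d+2} + d \log^* n)$ rounds on $G$. The main loop then processes the color classes of $H$ sequentially; within each class, in parallel for each target $i$ of that color, apply safeness (Theorem \ref{thm:local-duplication}) from a nearby non-frozen vertex $u$ to unfreeze $i$, using a short chain of safeness invocations when $d(u,i) < 7$ to bridge the gap up to the safeness radius. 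Since the operations within a class act on pairwise disjoint regions, their $O(d)$ single-vertex recolorings can be merged into $O(d \Delta^{O(1)})$ rounds of independent-set recolorings, giving a total schedule length of $O(d \Delta^{2d+4})$ across all $O(\Delta^{2d+2})$ classes.

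The main obstacle is ensuring that a non-frozen vertex remains close to each target $i \in T$ until $i$'s color-class turn is reached, since earlier operations---centered at targets of a different color, which need not be far from $i$---could in principle destroy the initially-planned non-frozen source. The key property of safeness that I would exploit here is that it unfreezes its target while preserving the non-frozen status of its source, so that applying safeness for earlier operations does not, on its own, deplete the reserve of non-frozen vertices around $i$. A careful choice of sources at each step, allowing both the original non-frozen vertex from $\sigma$ and the newly-unfrozen targets from earlier color classes to serve as valid sources, should yield a non-frozen vertex within distance $O(d)$ of $i$ at the moment it is processed, completing the construction within the stated round and schedule bounds.
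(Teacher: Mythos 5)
Your high-level plan (auxiliary coloring of $I$ at scale $\Theta(d)$, process the color classes one by one, unfreeze a representative of $I$ near each initially non-frozen vertex) is the same as the paper's. But two places where you depart from the paper's ladder-based construction contain real gaps.

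\textbf{Bridging $d(u,i)<7$ with ``a short chain of safeness invocations.''} Theorem~\ref{thm:local-duplication} lets a non-frozen center make a vertex at distance exactly $r\geq 7$ non-frozen; it propagates non-frozenness \emph{outward}. Chaining it can only reach vertices farther away, never a vertex at distance $<7$ from the source. So when the closest vertex $i_u\in I$ happens to be within distance $6$ of $u$, safeness does not apply and your chaining idea does not repair this. The correct primitive here is a ladder along a shortest path from $u$ to $i_u$ (this is what the paper uses); it unfreezes $i_u$ but \emph{recolors $u$}, which is exactly what makes your main invariant difficult.

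\textbf{``Safeness does not deplete the reserve.''} You argue that because safeness preserves the non-frozenness of its own source $u''$, earlier operations do not destroy the planned source $u$ of a later operation. This overstates what safeness gives you: a safeness step centered at $u''$ recolors vertices throughout $B(u'',r-1)$ (and one boundary vertex), and any of these recolorings can freeze a nearby vertex $u\neq u''$ that you were counting on as a source later. The crucial subtlety is precisely that there is only one designated non-frozen source vertex $x\in N$ at distance $\leq r$ from a given $w\in I$, and nothing stops another operation from touching $x$ or its neighborhood before $w$'s turn. The paper resolves this with ladders plus an explicit ``twist'': when the candidate source $x$ is adjacent to an already-unfrozen vertex $u'\in I$, the ladder is skipped, and one argues that $u'$ then certifies the distance bound for $w$; and if $x$ itself was consumed earlier as a ladder start, the end of that ladder is a non-frozen vertex of $I$ within distance $d$ of $x$, again certifying the bound. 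Your closing sentence (``A careful choice of sources\ldots should yield\ldots'') defers exactly this argument without supplying it, and the safeness-preserves-source property does not substitute for it.

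If you replace safeness by ladders and add a bookkeeping argument of the paper's type — either the explicit twist or tracking where non-frozenness migrates when a source gets consumed — your structure does yield the claim.
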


\begin{proof}
Consider an auxiliary coloring as described above, with $p=2d+2$.
Let $M_i$ be the set of vertices that are in $I$ and have received color $i$ in $H$. 
We will go through the sets $M_i$ one after another.
At step~$i$, for every $u \in M_i$ that is frozen, if $B(u,d)$ contains a vertex $v$ of $N$ that is still non-frozen, we recolor a ladder from $v$ to $u$ (where we take $v$ to be the closest non-frozen vertex).
Since, $p=2d+2$, the balls $B(u,d+1)$ with $u\in M_i$ are all disjoint by construction of the $M_i$. Therefore, we can perform this step in parallel without coordination.
Now, we want the additional property that a vertex $u$ of $I$ that has been unfrozen cannot be refrozen. This could happen if there is a non-frozen vertex in the neighborhood of $u$ that is the start of a ladder (thus at distance exactly $d$ from another vertex of $I$). 
We add a twist to the algorithm: if this situation occurs, we do not build the ladder.

To prove that the claim holds at the end of this process,
consider a vertex $w$ of $I$. 
By assumption, at the beginning $w$ was at distance at most $r$ from a non-frozen vertex $x$ of~$N$. 
Consider a vertex $u$ of $I$ in $B(x,d)$ (such a vertex exists by maximality). If this vertex $u$ is non-frozen, then the claim holds for $w$. 
If this vertex is frozen, the only possibility is that we did not build a ladder from $x$ to $u$ because of the twist in the algorithm. 
But in this case there exists a vertex $u'\in I$ in the neighborhood of $x$ which is necessarily non-frozen (since there is no obstruction to building a ladder from $x$ to $u'$).

The round complexity is dominated by the computation of the auxiliary coloring, and the schedule length can be bounded by the maximum size of a ladder inside a ball, $O(d)$ times the number of color classes $O(\Delta^{2d+4})$.
\end{proof}

\begin{claim}
\label{claim:distributed-iteration}
Consider a coloring and the distance from any vertex of $I$ to the closest non-frozen vertex of $I$ in this coloring. 
If this distance is positive, we can reach a new coloring, where this distance in strictly smaller, with a schedule of length $O(\Delta^{6d+14})$ in $O(d\Delta^{4d+10}+d\log^*n)$ rounds.
\end{claim}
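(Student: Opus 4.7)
The plan is to decrease, for each frozen vertex $u \in I$, the distance to its nearest non-frozen $I$-vertex by ``moving'' the non-frozen property one step closer along an $I$-path, in parallel over all such $u$. This hinges on two ingredients: the safeness result (Theorem~\ref{thm:local-duplication}) to duplicate the non-frozen property without re-freezing the source, and a ladder to propagate it across a stretch of frozen vertices connecting consecutive $I$-vertices.

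First I would set up an auxiliary coloring of $I$, as in Claim~\ref{claim:distributed-initial}. Define a graph $H$ on vertex set $I$ whose edges are pairs of $I$-vertices at $G$-distance at most $p$, for $p$ of order $O(d)$ chosen so that operations performed around two distinct same-color $I$-vertices cannot interfere. Compute a $(\Delta_H+1)$-coloring of $H$ via~\cite{BarenboimEK14}; since $\Delta_H = O(\Delta^{p})$, this costs $O(\Delta^{p} + \log^{*} n)$ rounds and produces $O(\Delta^{p})$ color classes.

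Then I would process color classes sequentially. In the step handling class $i$, for every frozen $u \in I$ of color $i$ with positive distance to its nearest non-frozen $I$-vertex, identify a closest such witness $w$ and a shortest $G$-path $P$ from $w$ to $u$. Apply Theorem~\ref{thm:local-duplication} at $w$ with a target vertex $x$ on $P$ at distance $7$ from $w$ toward $u$: this makes $x$ non-frozen and keeps $w$ non-frozen while touching only vertices of $B(w,6) \cup \{x\}$. Then build a ladder from $x$ along $P$ toward $u$, stopping at the first $I$-vertex $w'$ encountered on $P$ (which, viewed as a single $H$-hop from $w$, lies within $G$-distance at most $2d$ from $w$). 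If the relevant stretch of $P$ is not entirely frozen, restart the ladder from the non-frozen vertex on the subpath closest to $w'$. The resulting coloring has $w'$ non-frozen and strictly closer to $u$ than $w$.

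The main obstacle is orchestrating the parallel steps correctly. I must pick $p$ large enough that operations around distinct same-color $I$-vertices modify disjoint vertex sets (each touches only a ball of radius $O(d)$ around its $w$), and I must argue that at the end of the pass the distance from every frozen $u$ to its nearest non-frozen $I$-vertex has strictly decreased --- even for those $u$'s whose previous witness $w$ got frozen as a side effect of some neighboring operation. I expect this to be handled by the safeness guarantee that each targeted $w$ remains non-frozen, together with the auxiliary-coloring separation. Once this is in place, the per-class schedule length is $O(d)$ (a constant-size safeness operation plus a ladder of length $O(d)$), giving a total schedule of $O(d \Delta^{p}) = O(\Delta^{6d+14})$, with round complexity dominated by the auxiliary coloring, namely $O(d \Delta^{4d+10} + d \log^{*} n)$.
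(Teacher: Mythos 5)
Your approach reverses the direction of the paper's argument: you organize the work by the frozen targets $u$ (coloring those $u$'s and, for each, acting near its witness $w$), whereas the paper organizes it by the sources (coloring all of $I$ and, for each \emph{non-frozen} $u \in M_i$, unfreezing everything in $I$ within distance $2d+5$ of $u$). This seemingly cosmetic choice breaks your parallelization argument. The auxiliary coloring guarantees that distinct same-color $u_1, u_2 \in M_i$ are at $G$-distance more than $p$, but your operations touch balls around $w_1$ and $w_2$, not around $u_1$ and $u_2$. Nothing forces $w_1$ and $w_2$ to be far apart: two frozen $u_1, u_2$ on opposite sides of a single isolated non-frozen $I$-vertex $w$ may both have $w$ as their closest witness, in which case both operations invoke Theorem~\ref{thm:local-duplication} at $w$ and both modify vertices in $B(w,6)$ at the same time --- a conflict that no choice of $p$ prevents, since $d(u_i,w)$ is unbounded (it can be as large as $r+d$). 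Your own phrasing, ``each touches only a ball of radius $O(d)$ around its $w$,'' already concedes that the operation is localized around $w$ and not around the colored vertex.

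The paper sidesteps this by centering the recoloring at the colored vertices: with $p = 4d+10$, all work for a given non-frozen $u \in M_i$ happens inside $X_u = B(u,2d+4)$ (plus the $I$-vertices at distance exactly $2d+5$), and the $X_u$'s for same-color $u$'s are pairwise non-adjacent. The strict-decrease argument is then purely local: given a frozen $u$ with closest non-frozen witness $v$, either $d(u,v) \le 2d+4$, in which case $u \in X_v$ and is unfrozen; or the $(d+1)$-th vertex on a shortest $v$--$u$ path is within distance $d$ of some $v' \in I$, and $v' \in B(v,2d+1) \subseteq X_v$ is unfrozen while being strictly closer to $u$. If you wish to keep your ``pull'' formulation, you would have to color and iterate over the non-frozen $I$-vertices, which are the actual operation centers --- and at that point it is both simpler and cleaner to just unfreeze everything in $I$ within a fixed radius of each, which is exactly the paper's ``push'' approach.
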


\begin{proof}
Again, consider an auxiliary coloring as described at the beginning of the proof, but with parameter $p=4d+10$.
We will consider the color classes $M_i$, one after another.
For every $u \in M_i$, let $X_u$ be the ball $B(u,2d+4)$ plus the vertices of $I$ at distance exactly $2d+5$ from $u$ in $G$. 
Note that no vertex of $I$ in $V \setminus X_u$ is adjacent to $X_u$. 
If $u$ is non-frozen, then we can unfreeze all the vertices of $I \cap X_u$: since $d\geq 15$, we can proceed exactly like in the proof of Proposition~\ref{prop:centralized-warming}.
Note that, similarly to the previous proof, because of our definition of the sets $(X_u)_u$, these recolorings can be performed in parallel, and no vertex of $I$ that was non-frozen can be refrozen. 

We claim that, at the end of this recoloring, the minimum distance from any vertex $u$ of $I$ to the closest non-frozen vertex of $I$ has decreased. 
Indeed, let $v$ be the closest non-frozen vertex of $I$ from $u$ at the beginning. 
If $d(u,v) \leq 2d+4$, $u$ is non-frozen at the end of the algorithm by construction. 
Otherwise, let $x$ be the $(d+1)$-th vertex of a shortest path from $v$ to $u$. 
Note that $x$ must be at distance at most $d$ from a vertex $v'$ of~$I$. 
Thus $v'$ is in $B(v,2d+1)$. So $v'$ is unfrozen at the end of the algorithm. 
And since the distance from $u$ to $v'$ is strictly smaller than the one from $u$ to $v$, we get the condition of the claim.
The computation of the schedule length and number of rounds are similar to the ones of the previous claim, except the unfreezing of each $X_u$ uses $O(\Delta^{2d+4})$ recoloring steps.
\end{proof}

By using the algorithm of Claim~\ref{claim:distributed-initial}, and then iterating the algorithm of Claim~\ref{claim:distributed-iteration}, we can unfreeze all of $I$.

The number of iteration of Claim~\ref{claim:distributed-iteration} is at most $r+d$ by Claim~\ref{claim:distributed-initial}, thus the total schedule length is in $O((r+d)d\Delta^{6d+10})$. 
The total number of rounds is $O(d\Delta^{4d+10}+d\log^*n+r)$ since we can reuse the same auxiliary coloring for all the iterations.


\end{proof}

\section{Recoloring locally non-frozen colorings}
\label{sec:recoloring-non-froz}

The goal of this section is to prove Proposition~\ref{prop:main}. 
To do so we will first prove a few lemmas. 

\subsection{Degeneracy ordering lemma}

A graph $G$ is \emph{$d$-degenerate} if any subgraph of $G$ admits a vertex of degree at most~$d$. In other words, there exists an ordering $v_1,\ldots,v_n$ of the vertices such that for every $i \le n$, the vertex $v_i$ has at most $d$ neighbors in $v_{i+1},\ldots,v_n$.

\begin{lemma}
\label{lem:ISdecomposition}
Let $G$ be a connected $r$-locally non-frozen graph which is $k$-colorable, and let $S$ be a maximal independent set at distance at least $2r+2$. Let $B_S$ be the set of vertices at distance at most $r$ from $S$, and $G' = G \setminus B_S$. 

Then there exists a $(\Delta-1)$-degeneracy ordering of $G'$ consisting of $O(r \cdot k)$ consecutive independent sets. 
Moreover, if we are given a $k$-coloring $c$ of $G'$, such an ordering can be found in $O(r)$ rounds in the LOCAL model.
\end{lemma}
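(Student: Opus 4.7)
The plan is to use the BFS layers of $G$ around $S$, decomposed further by the coloring $c$, and verify that this yields the desired degeneracy ordering.

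First I would observe that since $S$ is a \emph{maximal} independent set under the distance-$(2r+2)$ constraint, every vertex $v \notin S$ lies at distance at most $2r+1$ from $S$: otherwise, $S \cup \{v\}$ would still satisfy the constraint, contradicting maximality. Then I define $L_i$ to be the set of vertices at distance exactly $i$ from $S$ in $G$. By the previous observation, $G' = L_{r+1} \cup L_{r+2} \cup \cdots \cup L_{2r+1}$, a union of $r+1$ layers.

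Next, using the given $k$-coloring $c$ of $G'$, I partition each layer $L_i$ into at most $k$ color classes $I_{i,1}, \ldots, I_{i,k}$, each of which is an independent set (as a subset of a color class of $c$). I then concatenate these into one ordering, from bottom to top: first $I_{r+1,1}, \ldots, I_{r+1,k}$, then $I_{r+2,1}, \ldots, I_{r+2,k}$, and so on up to $I_{2r+1,k}$. This gives at most $(r+1)k = O(rk)$ consecutive independent sets.

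The main technical point to verify is that this is a $(\Delta-1)$-degeneracy ordering. Fix a vertex $v \in L_i$ and count its later neighbors in $G'$. I would split into two cases. If $i \geq r+2$, then $L_{i-1} \subseteq G'$, and by the BFS property $v$ has at least one neighbor in $L_{i-1}$; that neighbor was processed earlier in our ordering, so $v$ has at most $\Delta - 1$ later neighbors in $G'$. If $i = r+1$, then every neighbor of $v$ lying in $L_r$ was removed when passing from $G$ to $G'$, and $v$ has at least one such neighbor by BFS; hence the degree of $v$ in $G'$ itself is at most $\Delta - 1$, and a fortiori so is the number of later neighbors. In both cases the bound holds.

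Finally, for the LOCAL implementation, every vertex of $G'$ only needs to know its layer index $i$ (equivalently, its distance to the closest vertex of $S$). Since this distance is between $r+1$ and $2r+1$, it can be computed in $O(r)$ rounds by a standard BFS from $S$; combined with the locally available color $c(v)$, this determines $v$'s position in the ordering. I do not expect a serious obstacle here: the only subtle point is the boundary case $i=r+1$, where we rely on the removal of $B_S$ to lose one degree and thus obtain degeneracy $\Delta-1$ rather than $\Delta$.
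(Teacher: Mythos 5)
Your proof is correct and follows essentially the same approach as the paper's: decompose $G'$ into BFS layers around $S$, refine each layer by the color classes of $c$ to obtain $O(rk)$ consecutive independent sets, and observe that each vertex loses at least one of its $\Delta$ potential later neighbors either to the preceding layer (for $i \geq r+2$) or to the removed set $B_S$ (for $i = r+1$). The only cosmetic difference is that the paper indexes layers by distance to $B_S$ rather than to $S$, which shifts the indices by $r$ but changes nothing substantive; your handling of the boundary case $i=r+1$ matches the paper's remark that the first layer has a neighbor in $B_S$.
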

\begin{proof}
The graph $G'$ is $(\Delta-1)$-degenerate because we have removed at least one vertex from a connected graph of maximum degree $\Delta$.
 The degeneracy ordering of $G'$ will be built by first splitting $G'$ into layers such that each vertex $v$ in layer $i$ has at most $\Delta -1$ neighbors in layers $j \geq i$. Then we will split each layer into independent sets using the coloring $c$. 

We define the $i$-th layer $L_i$ of $G'$ as the set of vertices at distance exactly $i$ from $B_S$. Since $S$ is a maximal independent set at distance $2r+2$, all the vertices of $G'$ belong to a layer $i$ with $i \le r +2$. All the vertices in the first layer have a neighbor in $B_S$ and, for every $i\ge 2$, all the vertices in layer $i$ have at least one neighbor in layer $(i-1)$. So the graph induced by the layers $\cup_{j \ge i} L_j$ is $(\Delta-1)$-degenerate (and all the vertices of $L_i$ have degree $\Delta-1$ in $\cup_{j \ge i} L_j$). 
We now split each layer into $k$ independent sets using the color classes of a $k$-coloring $c$. 
We can order the vertices in the layers by color, and get a $(\Delta-1)$-degeneracy ordering of $G'$ composed of  $O(r \cdot k)$ consecutive independent sets.

Note that in the LOCAL model, if $S$ is given, computing this partition can be done in $O(r)$ rounds. Indeed, after computing its distance to $S$, each vertex knows if it is in $B_S$ or in which layer it is. As their color in $c$ is given as input, they do not need more information.
\end{proof}

\subsection{List-coloring lemma}

The following lemma is a list-coloring adaptation of a proof of Dyer et al.~\cite{dyer2006randomly} that ensures that one can transform any $(d+2)$-coloring of a $d$-degenerate graph into any other. 
Let $G$ be a graph in which, for every vertex $u$, we are given a list $L_u$ of colors. 
A coloring $c$ of $G$ is \emph{compatible} with the lists $L_u$, if the coloring is proper and for every vertex $u$, $c(u) \in L_u$. Let $\tau$ be an ordering of $V(G)$. 
We denote by $d^+_\tau(u)$ (or $d^+(u)$, when $\tau$ is clear from context) the number of neighbors of $u$ that appear after $u$ in $\tau$. We say that a set of lists is \emph{safe for $\tau$} if, for every vertex $u$, $|L_u| \ge d^+_u+2$.

We will consider particular schedules in the LOCAL model such that, at each step, all the recolored vertices are recolored from a color $a$  to a color $b$ (in particular, the recolored vertices form an independent set). We call such a reconfiguration step an \emph{$a\to b$ step}. A recoloring schedule where all the steps are $a \to b$ steps is called a \emph{restricted schedule}. Note that any schedule can be transformed into a restricted schedule by multiplying the length of the schedule by $O(k^2)$ (where $k$ is the total number of colors). Indeed, we simply have to split each step $s$ of the initial schedule into $k(k-1)/2$ different $a \to b$ steps $s_{a,b}$ for every pair of colors $a,b$. At step $s_{a,b}$, we recolor from $a$ to $b$ all the vertices recolored from $a$ to $b$ at step $s$. Note that since at step $s$, the set of recolored vertices is an independent set, all the intermediate colorings obtained after $s_{a,b}$ are proper.

\begin{lemma}
\label{lem:ISrecoloring_List}
Let $G$ be a graph, $\tau$ be an ordering of $G$ composed of $t$ consecutive independent sets and, $d = \max_{v \in V} d^+_\tau(v)$. Consider a set of lists $(L_v)_{v \in V}$ safe for $\tau$. Let $\sigma,\eta$ be two $k$-colorings of $G$ compatible with $(L_v)_{v \in V}$.

There exists a recoloring sequence from $\sigma$ to $\eta$ with a restricted schedule of length at most $k^{t+1}$ where $k =| \cup_{v \in V} L_v|$. Moreover, this schedule can be found in $O(r)$ rounds if the independent sets of $\tau$ are given.
\end{lemma}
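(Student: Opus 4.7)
The plan is to prove the lemma by induction on $t$, the number of consecutive independent sets of $\tau$. The base case $t=1$ is direct: $G$ itself is an independent set, so for each ordered pair of distinct colors $(a,b)$ appearing in $\bigcup_v L_v$ we perform a single $a\to b$ step that simultaneously recolors every vertex with $\sigma(v)=a$ and $\eta(v)=b$. Since $G$ has no edges each such step trivially preserves properness, and after processing all ordered color pairs we reach $\eta$. The total length is at most $k(k-1)\le k^{t+1}$.

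For the inductive step, set $I=I_t$, $G''=G\setminus I$, and let $\tau''$ denote the restriction of $\tau$ to $G''$, which consists of $t-1$ consecutive independent sets. Any $I$-neighbor of a vertex $v \in V(G'')$ lies strictly after $v$ in $\tau$, so
\[
d^+_{\tau''}(v) = d^+_\tau(v) - |N(v)\cap I|,
\]
and hence shrinking the list $L_v$ by removing the colors currently used on $v$'s $I$-neighborhood always leaves a list of size at least $d^+_{\tau''}(v)+2$, i.e.\ safe for $\tau''$. My plan is then to iterate over the $k$ colors $c$, and for each $c$:
\begin{enumerate}[(a)]
    \item apply the inductive hypothesis to $G''$ (with the lists shrunk by the current $I$-colors) to bring $G''$ into a coloring that avoids $c$ on the $G''$-neighborhood of every $v\in I$ with $\eta(v)=c$;
    \item for each current color $a$, perform a single restricted $a\to c$ step that simultaneously moves every $v\in I$ with current color $a$ and $\eta(v)=c$ to its target.
\end{enumerate}
After the $k$ outer iterations, $I$ sits at $\eta|_I$; a final recursive call on $G''$ with the shrunk lists then brings it to $\eta|_{G''}$. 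Each recursive call costs at most $k^t$ restricted steps by the inductive hypothesis, there are $O(k)$ of them, and the direct moves on $I$ contribute $O(k^2)$ further steps, for a grand total of $O(k^{t+1})$ restricted steps.

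The main obstacle I foresee is that the target $\eta|_{G''}$ may fail to be compatible with the shrunk lists: some $v\in V(G'')$ could have $\eta(v)=\sigma(u)$ for an $I$-neighbor $u$, preventing a direct recursion to $\eta|_{G''}$. The fix exploits that every $v\in I_t$ satisfies $d^+_\tau(v)=0$ and so $|L_v|\ge 2$, giving each such $v$ a \emph{swing} color it can temporarily assume in order to free up $\eta(v)$ as a valid target for its $G''$-neighbors. Interleaving these swing moves on $I$ with the recursive passes on $G''$ is exactly what the outer color-by-color loop is designed to accomplish, and carefully orchestrating this interleaving (and verifying that no restricted step creates a conflict) is the technically delicate part of the proof.

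For the LOCAL claim, once the partition of $\tau$ into independent sets is provided as input, every vertex already knows which $I_j$ it belongs to, and one round of communication per recursion level suffices for it to learn the $\sigma$-, $\eta$-, and index-labels of its neighbors. Each vertex can then locally simulate the recursion above and output its own portion of the schedule, giving a total round complexity bounded by $O(t)$.
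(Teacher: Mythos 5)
Your decomposition is the mirror image of the paper's. The paper builds up from the last independent set of~$\tau$: it first recolors $G[I_1]$ (the last set, where $d^+=0$ so $|L_v|\ge 2$), and at level~$i$ it takes the restricted schedule for $G_{i-1}$ and, before each $a\to b$ step $s$, inserts $k-2$ clearing steps that push every conflicting vertex of $I_i$ off colour $b$. Because the set $I_i$ being added comes \emph{earlier} in $\tau$, the forward degrees and lists of the already-processed vertices never change; there is no list-shrinking, the induction closes in a single pass per level, and the count comes out to exactly $(k-2)k^i+k(k-1)\le k^{i+1}$. You instead peel the last set off $\tau$ and recurse on the remainder with lists shrunk by the removed set's current colours. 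That is where the difficulty comes from: the shrunk lists depend on colours you are in the middle of changing, which is what forces your outer colour-by-colour loop and the multiple recursive calls per level.

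The genuine gap is precisely the part you flag as ``technically delicate''. To make the induction close you still need to: (i) exhibit, at each outer iteration, a target $G''$-colouring that is compatible with the current shrunk lists \emph{and} avoids colour~$c$ on the $G''$-neighbourhoods of the relevant $I$-vertices --- a greedy argument along $\tau''$ should work, but the $c$-avoidance constraint costs one more colour per constrained vertex and this must be accounted for; (ii) prove the invariant that, after step~(b) moves the $\eta^{-1}(c)\cap I$ vertices to $c$, the current $G''$-colouring is still compatible with the lists newly shrunk by the updated $I$-colours that will govern the next recursive call (this holds only \emph{because} the iteration-$c$ target avoided $c$ in exactly the right places, and that chain of reasoning is never stated); and (iii) verify that the final recursive call from the last intermediate colouring to $\eta|_{G''}$ is a legitimate application of the inductive hypothesis, i.e.\ that both endpoints are compatible with the same safe lists. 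Absent these three verifications the proposal is a plan, not a proof.

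Two smaller discrepancies are worth flagging. Your schedule length is $(k+1)k^t + O(k^2)$, which is $O(k^{t+1})$ but exceeds the lemma's stated bound of $k^{t+1}$ (harmless for the paper's downstream use, but a departure from the statement). And your LOCAL round bound of $O(t)$ does not match the lemma's stated $O(r)$, where $r$ is the number of distance-layers (so $t=O(rk)$); again this is absorbed elsewhere in the paper, but it is a weaker claim than what is written.
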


\begin{proof}
Let $I_t,\ldots,I_1$ be the independent sets of the ordering $\tau$. For every $i \leq t$, we denote by $G_i$ the graph $G[\cup_{j \le i} I_j]$. 

Let us prove by induction on $i$ that we can recolor $G_i$ from $\sigma_{G_i}$ to $\eta_{G_i}$ with a restricted schedule of length at most $k^{i+1}$. Since $G_1$ induces an independent set, a restricted schedule of length $k \cdot (k-1)\le k^2$ exists. Indeed, for every pair $a \ne b$, we create an $a \to b$ step where we recolor the vertices of $I_1$ colored $a$ in $\sigma$ and $b$ in $\eta$ from color $a$ to color $b$. After all these steps, the coloring is $\eta_{G_1}$. Since $I_1$ is an independent set, we indeed recolor an independent set at any step. 

In order to extend the transformation of $G_{i-1}$ into a transformation of $G_{i}$ (with $i\ge2$) we perform as follows. 
For each step $s$ of the transformation of $G_{i-1}$, we will add $(k-2)$ new steps before $s$. Since the transformation is a restricted schedule, there exists $a,b$ such that $s$ is an $a \rightarrow b$ step. 
For every $c \ne a,b$, we add a $b \rightarrow c$ step, denoted $s_{b,c}$, between $s$ and the step before in the transformation of $G_{i-1}$. 
Let $I$ be the set of vertices recolored at step $s$, and $N_I$ be the set of vertices at distance exactly 1 from a vertex of $I$.
In $s_{b,c}$, we recolor all the vertices of $G_i \cap N_I$ colored $b$ with the color $c$, if it is possible (i.e. if $c$ is in their lists, and they do not have any neighbor already colored $c$) .
Note that every vertex $v$ of $I$ colored $b$ can indeed be recolored with some color $c$, distinct from $a$, since the size of the list of $v$ is at least the degree of $v$ plus two in $G_i$. 
So after these new steps, we can safely apply the $a \to b$ step without creating monochromatic edges in $G_i$.

Finally, at the end of the reconfiguration sequence of $G_{i-1}$, we add $k \cdot (k-1)$ steps in order to recolor the vertices of $I_i$ with their target colors (after $G_{i-1}$ has reached its target coloring) as we did for $I_1$. This provides a restricted schedule of length $(k-2) \cdot k^{i}+k \cdot (k-1)\le k^{i+1}$ from $\sigma$ to $\eta$ which completes the proof.

In the LOCAL model, to compute their own layers, the vertices need $O(r)$ rounds. 
In order to compute its own schedule, a vertex simulates the induction, above, which can be done with a view of $O(r)$ rounds. 
\end{proof}

As an immediate corollary, we obtain the following, where the lists are just the same $k$ colors for every vertex:
\begin{lemma}
\label{lem:ISrecoloring}
Let $G$ be a $d$-degenerate graph and $\sigma,\eta$ be two $k$-colorings of $G$ with $k \ge d+2$.
Assume that $G$ has a degeneracy ordering composed of $t$ consecutive independent sets. Then there exists a recoloring sequence from $\sigma$ to $\eta$ with a restricted schedule of size at most $k^{t+1}$ in the LOCAL model.
\end{lemma}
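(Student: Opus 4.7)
The plan is to deduce this statement as a direct specialization of Lemma~\ref{lem:ISrecoloring_List}, using the constant list $L_v = \{1,\ldots,k\}$ at every vertex. The setup of Lemma~\ref{lem:ISrecoloring_List} requires three ingredients: an ordering decomposable into $t$ independent sets (given by hypothesis), a set of lists that is safe for this ordering, and two starting and target colorings compatible with these lists. The only point that needs checking is the safeness of the constant lists.

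To verify safeness, I will observe that since $\tau$ is a degeneracy ordering of a $d$-degenerate graph, we have $d^+_\tau(v) \le d$ for every vertex $v$. Combined with the hypothesis $k \ge d+2$, this yields $|L_v| = k \ge d+2 \ge d^+_\tau(v) + 2$, which is exactly the safeness condition. Compatibility of $\sigma$ and $\eta$ with the constant lists is trivial since both colorings use only colors from $\{1,\ldots,k\}$, and the parameter called $k$ in the list-coloring lemma (the total palette size $|\cup_v L_v|$) coincides with our $k$.

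Applying Lemma~\ref{lem:ISrecoloring_List} then directly yields a restricted schedule from $\sigma$ to $\eta$ of length at most $k^{t+1}$, together with the LOCAL round complexity stated there (since the independent sets of $\tau$ are assumed given). There is essentially no obstacle here: the whole argument is a one-line reduction, and the bound $k^{t+1}$ is exactly what Lemma~\ref{lem:ISrecoloring_List} produces once the lists are plugged in.
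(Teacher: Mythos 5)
Your proposal is correct and is precisely the argument the paper intends: Lemma~\ref{lem:ISrecoloring} is stated as an immediate corollary of Lemma~\ref{lem:ISrecoloring_List} by taking $L_v=\{1,\dots,k\}$ for all $v$, and you verify safeness and compatibility exactly as needed. Nothing further to add.
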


\subsection{Recoloring outside the balls}

Let us now prove that we can obtain a coloring where the vertices agree on $V \setminus B_S$.
Then we will explain how we can transform such a coloring into the target coloring by recoloring (almost) only vertices of $B_S$.

\begin{lemma}
\label{lem:recoloutside}
Let $k \ge \Delta+1$ and $r \geq 10$. Let $G$ be a graph of maximum degree $\Delta \ge 3$, and let $\sigma, \eta$ be two $r$-locally non-frozen $k$-colorings of $G$. Let $S$ be a maximal independent set at distance $r' \ge 2r+2$.
Let $G'=G[V \setminus B_S]$ where $B_S= \cup_{x \in S} B(x,r)$. 

Then there is a recoloring schedule of length $k^{O(r'k)}$ from $\sigma$ to $\eta'$ such that $\eta'_{G'} = \eta_{G'}$.
\end{lemma}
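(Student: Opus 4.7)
The plan is to apply Lemma~\ref{lem:ISrecoloring_List} (list-coloring reconfiguration) directly to $G'=G[V\setminus B_S]$, treating $B_S$ as fixed throughout and letting each vertex of $G'$ choose colors from a list that forbids, at each vertex, the colors its $B_S$-neighbors currently hold under $\sigma$.

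First I would invoke Lemma~\ref{lem:ISdecomposition} with the $k$-coloring $\sigma_{G'}$ to build a $(\Delta-1)$-degeneracy ordering $\tau$ of $G'$ composed of $t=O(r'k)$ consecutive independent sets. The ordering visits the layers $L_1,L_2,\ldots,L_{r+2}$ (vertices at distance $i$ from $B_S$) in increasing order, subdividing each layer by $\sigma$-color classes; by construction every $v\in G'$ satisfies $d^+_\tau(v)\le \Delta-1$. Next I would set $L_v=\{1,\ldots,k\}\setminus \{\sigma(w):w\in N(v)\cap B_S\}$ for every $v\in G'$. Because $\sigma$ and $\eta$ are proper $k$-colorings of $G$, their restrictions to $G'$ are both compatible with these lists. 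Applying Lemma~\ref{lem:ISrecoloring_List} to $G'$ with the ordering $\tau$ and lists $(L_v)_{v\in G'}$ will then produce a restricted schedule of length at most $k^{t+1}=k^{O(r'k)}$, which matches the target bound. Since no vertex of $B_S$ is ever recolored, this schedule lifts to a schedule on $G$ whose final coloring $\eta'$ satisfies $\eta'_{G'}=\eta_{G'}$.

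The main technical obstacle will be verifying the safeness condition $|L_v|\ge d^+_\tau(v)+2$ of Lemma~\ref{lem:ISrecoloring_List}. A direct count gives
\[
|L_v|-d^+_\tau(v)\;\ge\; k-d_G(v)+d^-_\tau(v),
\]
so for $v\in L_i$ with $i\ge 2$ the existence of a neighbor in $L_{i-1}$ already forces $d^-_\tau(v)\ge 1$, and with $k\ge \Delta+1$ the condition holds (it is also immediate whenever $d_G(v)<\Delta$). The tight case is a vertex $v\in L_1$ with $d_G(v)=\Delta$ sitting at the start of $\tau$ with no earlier $G'$-neighbor. For such $v$ the bound is off by exactly one, and I expect to close this gap either by choosing the color-class ordering inside $L_1$ so that every $\Delta$-regular vertex of $L_1$ has an earlier $L_1$-neighbor in $\tau$, or by a short preliminary local adjustment of $\sigma$ on $B_S$ near the problematic vertices—using the $r$-locally non-frozen hypothesis and Theorem~\ref{thm:local-duplication} to route a ladder that duplicates a color in $v$'s $B_S$-neighborhood, thereby enlarging $L_v$ by one and restoring the missing unit of slack before the list-recoloring begins.
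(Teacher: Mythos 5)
Your plan has two gaps; the first you flag honestly, but the second is fatal and you do not seem to have noticed it.

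The fatal issue is compatibility of the \emph{target} coloring with your lists. You set
$L_v=\{1,\ldots,k\}\setminus\{\sigma(w):w\in N(v)\cap B_S\}$
and assert that both $\sigma_{G'}$ and $\eta_{G'}$ are compatible with $(L_v)$. For $\sigma_{G'}$ this is true (properness of $\sigma$). For $\eta_{G'}$ it is false in general: properness of $\eta$ gives $\eta(v)\ne\eta(w)$ for a $B_S$-neighbor $w$, not $\eta(v)\ne\sigma(w)$. So $\eta(v)$ can very well lie outside $L_v$. In that case the hybrid coloring you are aiming for — $\eta$ on $G'$ glued to the untouched $\sigma$ on $B_S$ — is not even a proper coloring of $G$, so no schedule that never touches $B_S$ can reach $\eta_{G'}$. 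This is not a boundary effect you can shrug off; it is exactly why the paper does \emph{not} fix $B_S$: its proof runs the restricted schedule from Lemma~\ref{lem:ISrecoloring} on $G'$ with the full palette, and before each $a\to b$ step it dynamically relocates the $B_S$-neighbors that currently hold color $b$, using Theorem~\ref{thm:local-duplication} inside each ball $B(u,r)$ (after a short preprocessing step that first unfreezes every $u\in S$ via a ladder, which your plan also omits and is needed to invoke safeness at every round).

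The second gap you did identify: the safeness condition $|L_v|\ge d^+_\tau(v)+2$ fails by one for a vertex $v\in L_1$ of degree $\Delta$ whose $B_S$-neighbors all carry distinct colors and which has no earlier $G'$-neighbor under $\tau$. Your Fix~1 (reorder $L_1$) does not work: if such a $v$ is isolated in $G[L_1]$ (all its $G'$-neighbors sit in $L_2$), then $d^-_\tau(v)=0$ under any ordering of $L_1$, since all of $L_2,\ldots$ come after $L_1$ in the degeneracy ordering. Your Fix~2 (recolor a $B_S$-neighbor to create a repeated color and grow $L_v$ by one) is closer in spirit to what the paper does, but Theorem~\ref{thm:local-duplication} only guarantees that a boundary vertex can be recolored to \emph{some} color, not to a color you prescribe, so you cannot force a collision in $N(v)\cap B_S$; and even if you could, you would be recoloring $B_S$, abandoning the static-list setup that was the whole point. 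Once you allow recoloring $B_S$, you are essentially back to the paper's dynamic scheme, which you might as well adopt directly: run Lemma~\ref{lem:ISdecomposition}$+$Lemma~\ref{lem:ISrecoloring} on $G'$ to get a restricted schedule, keep each $u\in S$ unfrozen, and before every $a\to b$ step push the $b$-colored $B_S$-neighbors of the about-to-be-recolored set out of the way via Theorem~\ref{thm:local-duplication}, ball by ball in parallel.
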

\begin{proof}
The first part of the recoloring sequence is a pre-processing step to ensure that every vertex $v \in  S$ is non-frozen. Since $\sigma$ is $r$-locally non-frozen, for every $v$ in $S$, there is a vertex $u$ in $B(v,r)$ such that $u$ is non-frozen. By recoloring a ladder along a shortest path from $u$ to $v$, $v$ is non-frozen. Since $B(v,r)$ does not share an edge with $B(v',r)$ for any $v,v' \in S$, we can repeat this argument for every $v \in S$ and then assume that $S$ is unfrozen. In the LOCAL model, all these recolorings pre-processing steps can be performed in parallel. So, from now on, we can assume that, in $\sigma$, every vertex of $S$ is non-frozen (and we will keep this property all along the schedule).

By Lemma~\ref{lem:ISdecomposition} and Lemma~\ref{lem:ISrecoloring}, there exists a restricted recoloring schedule $\mathcal R$ in $G'$ from $\sigma_{G'}$ to $\eta_{G'}$ in at most $(\Delta+2)^{O(r\Delta)}$ steps.

Let us now explain how we can extend the restricted schedule $\mathcal R$ of $G'$ to $G$, that is, avoid the conflicts between vertices in $G'$ and their neighbors in $G$ that are in $B_S$. 
Let $X$ be the set of vertices which are recolored during an $a \to b$ step of $\mathcal{R}$. Denote by $Y$ the set of vertices of $B_S$ such that $Y$ is adjacent to a vertex of $X$. We will recolor these vertices, before they create any conflict.
 
For each ball of radius $r$ centered in $u \in S$, we first identify the vertices of $Y_u=Y \cap B(u,r)$ that are colored $b$. Note that $Y_u$ is an independent set. 
By Theorem~\ref{thm:local-duplication}, we can recolor each vertex of $Y_u$ in at most $2r$ steps with a different color, leaving $u$ unfrozen, and without modifying the color of any other vertex in $Y_u$. Since $Y_u$ contains at  most $\Delta^r$ vertices, we can change the color of all the vertices of $Y_u$ with a schedule of length at most $2r \cdot \Delta^r $. Since all the balls of radius $r$ centered in $S$ are disjoint and do not share an edge, we can perform these schedules in parallel for each ball of radius centered in $S$.

Since the restricted schedule $\mathcal{R}$ has length at most $k^{O(r' k)}$, the new schedule have length at most $k^{O(r'k)} \cdot 2r' k^{k+2}= k^{O(r'k)}$, which completes the proof.
\end{proof}

The previous lemma ensures that, from any locally non-frozen coloring, we can obtain a locally non-frozen coloring where all the vertices but the vertices of $B_S$ are colored with the target coloring.
Before completing the proof of Proposition~\ref{prop:main}, we need one more lemma.

\subsection{Recoloring inside the balls (easy case)}

\begin{lemma}\label{clm:recolball}
Let $k \ge \Delta+1$. Let $\sigma$ and $\eta$ be two $k$-colorings of a graph $G$ which only differ on $X \subseteq V$. Assume that, in each connected component $C$ of $G[X]$, there exists a vertex that  has degree at most $k-2$ or has two neighbors in $V \setminus X$ colored the same. Then there is a recoloring schedule from $\sigma$ to $\eta$ of length at most $k^{O(\diam(X) k)}$.
\end{lemma}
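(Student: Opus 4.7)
The plan is to reduce to Lemma~\ref{lem:ISrecoloring_List} applied independently on each connected component of $G[X]$. Since $\sigma$ and $\eta$ agree outside $X$, for every vertex $u\in X$ define the list
$L_u = [k]\setminus\{\sigma(w) : w\in N(u)\setminus X\}$
of colors not used by outside neighbors; this set is the same under $\sigma$ and $\eta$, and both $\sigma(u)$ and $\eta(u)$ belong to $L_u$. Because at most $\deg_G(u)-\deg_C(u)$ distinct colors are forbidden, we get $|L_u|\ge k-\Delta+\deg_C(u)\ge \deg_C(u)+1$ for every $u$ in its component $C$.

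Fix a component $C$ of $G[X]$ and let $v_C\in C$ be the special vertex provided by the hypothesis. Build an ordering $\tau$ of $C$ by putting $v_C$ first and then following BFS layers from $v_C$, refining each layer into independent sets using a proper coloring of that layer (for instance, the one inherited from $\sigma$). This produces an ordering composed of $t=O(\diam(C)\cdot k)$ consecutive independent sets, with $\{v_C\}$ as its first class. For every $u\ne v_C$, the parent of $u$ in the BFS tree appears strictly earlier in $\tau$, so $d^+_\tau(u)\le \deg_C(u)-1$, and the bound above gives $|L_u|\ge \deg_C(u)+1\ge d^+_\tau(u)+2$. For $u=v_C$, one has $d^+_\tau(v_C)=\deg_C(v_C)$, so we need one extra color of slack; this is precisely what the hypothesis on $v_C$ provides, since either $\deg_G(v_C)\le k-2$ gives $|L_{v_C}|\ge k-(\deg_G(v_C)-\deg_C(v_C))\ge \deg_C(v_C)+2$, or the two same-coloured outside neighbours shrink the forbidden set by one, once again yielding $|L_{v_C}|\ge \deg_C(v_C)+2$. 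Hence $(L_u)_{u\in C}$ is safe for $\tau$.

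Now apply Lemma~\ref{lem:ISrecoloring_List} to $C$ with ordering $\tau$, lists $(L_u)$, and endpoints $\sigma_C$ and $\eta_C$. This produces a restricted recoloring schedule of $C$ from $\sigma_C$ to $\eta_C$ of length $k^{t+1}=k^{O(\diam(X)\cdot k)}$, which by construction leaves $V\setminus X$ untouched and at each step only uses colors compatible with the fixed colors outside $X$. Distinct components of $G[X]$ are pairwise non-adjacent in $G$, and the schedules produced for them interact only via the frozen colors of $V\setminus X$; we may therefore execute all per-component schedules in parallel, padding the shorter ones with identity steps, to obtain a single schedule from $\sigma$ to $\eta$ of length $k^{O(\diam(X)\cdot k)}$.

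The only subtle point is the list-safety at $v_C$. Everywhere else the inequality $|L_u|\ge d^+_\tau(u)+2$ holds automatically because $k\ge \Delta+1$ and $u$ has a BFS-parent inside $C$; at $v_C$ one extra unit of slack is required, and the two alternatives stated in the hypothesis are precisely what delivers it.
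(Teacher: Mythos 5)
Your proof is correct and follows essentially the same route as the paper's: define lists by removing the colors of outside neighbors, observe that the special vertex guaranteed by the hypothesis gets the one extra unit of slack needed to make the lists safe, build a BFS-layered ordering from that vertex refined into color classes, and conclude with Lemma~\ref{lem:ISrecoloring_List} applied independently to each component. Your write-up is in fact somewhat more careful than the paper's (you verify the safety inequality vertex by vertex, including the degenerate root case, and you explicitly use the list version of the lemma where the paper's proof cites Lemma~\ref{lem:ISrecoloring} but actually relies on Lemma~\ref{lem:ISrecoloring_List}); the underlying argument is the same.
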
 
\begin{proof}
Let $C$ be a connected component of $X$.
For every vertex $v$ of $G[C]$, let $Z_v$ be the set of colors $\sigma$ that appear on neighbors outside $X$, that is on $N(v)\cap(V \setminus X)$.
We assign to every vertex $v$ of $G[C]$ the list of colors $[k] \setminus Z_v$. 
Note that since the total number of colors is $k \ge \Delta+1$, every vertex $x \in C$ has a list of size at least $d_{G[X]}(x)+1$. Moreover, if a vertex $x$ has degree at most $k-2$ in $G$, or two neighbors of $x$ are colored the same in $V \setminus X$, its list has size $d_{G[X]}(x)+2$. 
We claim that we can build a degeneracy ordering of $C$ for which the lists of $C$ are safe, and that consists of $\diam(C) k$ consecutive independent sets. 
Indeed, similarly to earlier in the paper, we can take the vertices of $C$ by layers, corresponding to the distance from $x$, and then split these layers into independent sets using the colors of $\sigma$. 

Finally, by Lemma~\ref{lem:ISrecoloring}, there exists a recoloring  sequence of $G[C]$ from $\sigma$ to $\eta$ which recolors each vertex at most $k^{O(\diam(C) k)}$ times. Since we can treat each connected component of $X$ simultaneously (there is no edge between them), the conclusion follows.
\end{proof}

\subsection{Finishing the proof of Proposition~\ref{prop:main}}

All the previous lemmas can be combined in order to prove Proposition~\ref{prop:main}, that we restate here.

\propMain*

\begin{proof}[Proof of Proposition~\ref{prop:main}]
Let us fix $r=7$. Let $I$ be a maximal independent set at distance $r'=2r+14$. Let $G'=G \setminus B_I$ where $B_I= \cup_{x \in I} B(x,r)$.
By Lemma~\ref{lem:recoloutside}, there is a coloring $\eta'$ which agrees with $\eta$ on $G \setminus B_I$ and a recoloring schedule from $\sigma$ to $\eta'$ of length at most $k^{O(k r)}$. To conclude, we only need to find a recoloring sequence from $\eta'$ to $\eta$, that is to prove that we can recolor all the balls of $B_I$ with their target coloring $\eta$.

For every ball $B_v$ of radius $r$ centered in $v \in I$, we will define a set $B_v'$ which is an extension of $B_v$. We might include some nodes at distance at most $r+5$ from $v$ in order to satisfy the conditions of Lemma~\ref{clm:recolball}. Since $I$ is an independent set at distance $2r+14$, for every $v,w \in I$, the sets $B_v'$ and $B_w'$ will be at distance at least $4$. Let $B_I' = \cup_{v \in I} B_v'$. Since the diameter of each ball $B_v'$ for $v \in S$ is $O(r)$ and all the balls of $B_S'$ are disjoint, we will conclude using Lemma~\ref{clm:recolball}. In the rest of the proof, we restrict to a single ball $B_v$ for $v \in I$ denoted by $B$ for simplicity.

If a vertex of $B$ has two neighbors in $V \setminus B$ colored the same or has degree less than $k-2$, we set $B'=B$. Otherwise, let us prove that by adding a few vertices to $B$ and doing a few recoloring steps, we can apply the Lemma~\ref{clm:recolball}. Note that no vertex of $V \setminus B$ is colored with $k$ in $\eta'$, since it agrees with $\eta$, which is a $k'$-coloring with $k'<k$ by assumption.

Let us consider a path $v_1, v_2, \ldots, v_6$ of vertices such that $v_i$ is at distance $i$ from $B$. 
For every $i \in \{3,4,5\}$, we can remark we can obtain a desired set $B'$ if one of the following holds:
\begin{itemize}
    \item If $\text{deg}(v_i)<\Delta$, then we simply take $B'=B \cup_{j\le i} v_i$ which contains a vertex of degree less than $\Delta$.
    \item If $N(v_i) \setminus v_{i-1}$ is not a clique then let $a,b$ be two neighbors of $v_i$ that are non adjacent. Then, since $d(a,B)$ and $d(b,B)$ are at least two, we can recolor $a$ and $b$ with $k$ in $\eta'$ (the coloring is proper since color $k$ was not used in $\eta$ by assumption). Now, in this new coloring, $B'=B \cup_{j\le i} v_i$ satisfies the condition. 
    (We will recolor $a$ and $b$ to the right color at the very end of the algorithm.)
\end{itemize}

Let us now prove that one of the conditions above must hold.
Assume, for the sake of contradiction, that for every $3 \le i \le 5$, $N(v_i) \setminus v_{i-1}$ is a clique and that all the $v_i$'s have degree at least $\Delta$. 

Let $z$ be a vertex of $N(v_3)$ distinct from $v_2$ and $v_4$ (which exists since $\Delta \ge 3$).
The vertex $z$ is at distance at most $4$ from $B$. Moreover, $v_4z$ is an edge (otherwise $N(v_3) \setminus v_2$ is not a clique). Since $N(v_4) \setminus v_3$ is a clique, $zv_5$ must also be an edge. But then $N(v_5) \setminus v_4$ cannot be a clique: that would mean that $z$ and $v_6$ are adjacent, and then $v_6$ would be at distance 5 from $B$, which is a contradiction.

Now, by Lemma~\ref{clm:recolball}, we can recolor all the vertices of $B'$ with the target coloring~$\eta$ in such a way that every vertex of $B'$ is recolored at most $\Delta^{O(\Delta r)}$ times (since the diameter of $B'$ is at most the diameter of $B$ plus $5$). We then finally recolor, if needed, the two vertices recolored $k$ in the second item of the construction of $B'$ with their real target color in $\eta$.

Since all the balls $B'$ are disjoint and do not share an edge, we can apply these steps in parallel. Moreover, since they are at distance at least $4$, the fact that we recolor a vertex at distance $5$ from $B$ can also be done in parallel. 
This completes the proof of Proposition~\ref{prop:main}.
\end{proof}

\paragraph{Acknowledgments.}

We would like to thank Valentin Gledel for fruitful discussions at the early stage of the project, as well as Quentin Ferro for his helpful study on the specific case of toroidal grids.

\bibliographystyle{abbrv}
\bibliography{biblio}

\end{document}